\documentclass[11pt,a4paper]{article}
\usepackage[qm]{qcircuit}

\usepackage{mathtools}
\usepackage{authblk} 
\usepackage{algpseudocode,algorithmicx,algorithm}

\usepackage{mathrsfs}
\usepackage{latexsym,bm}
\usepackage{amsmath,amsfonts,amssymb,amsthm}
\usepackage{extarrows}

\usepackage{graphicx,subfigure,epstopdf,float}
\usepackage{enumerate,cases,multirow}
\usepackage{makecell}
\usepackage{caption}

\usepackage{longtable,colortbl,arydshln,threeparttable}
\definecolor{mygray}{gray}{.9}

\usepackage{indentfirst}
\usepackage[top=25mm,bottom=20mm,left=25mm,right=20mm]{geometry}
\usepackage{cite}

\usepackage{listings}

\usepackage{makeidx}        
\usepackage{booktabs}
\usepackage[bookmarks,bookmarksnumbered,colorlinks,citecolor=red,linkcolor=red,hyperindex,linktocpage=true]{hyperref}

\newcommand{\ket}[1]{| #1 \rangle} 
\newcommand{\bra}[1]{\langle #1 |} 

\newcommand{\bb}{\boldsymbol}

\newcommand{\norm}[1]{\left\| #1 \right\|}
\def \d {\mathrm{d}}
\def \e {\mathrm{e}}
\def \i {\mathrm{i}}

\newcounter{parentalgorithm}

\makeatother

\newtheorem{theorem}{Theorem}[section]
\newtheorem{lemma}{Lemma}[section]

\newtheorem{definition}{Definition}[section]

\theoremstyle{remark}
\newtheorem{remark}{\bf Remark}[section]

\numberwithin{equation}{section}

\begin{document}

	\title{\bfseries Quantum algorithms for the fractional Poisson equation via rational approximation}

\author{Yin Yang\thanks{yangyinxtu@xtu.edu.cn}}
\author{Yue Yu\thanks{terenceyuyue@xtu.edu.cn}}
\author{Long Zhang}
\author{Ming Zhou}
\affil{School of Mathematics and Computational Science, Xiangtan University, Xiangtan, Hunan 411105, China}
\affil{Hunan Research Center of the Basic Discipline Fundamental Algorithmic Theory and Novel Computational Methods, Xiangtan, Hunan 411105, China}
\affil{National Center for Applied Mathematics in Hunan, Xiangtan, Hunan 411105, China}
	
\maketitle
	
\begin{abstract}
This paper presents a quantum algorithm for solving the fractional Poisson equation \((-\Delta)^s u = f\) with \(s \in (0,1)\) on bounded domains. The proposed approach combines rational approximation techniques with quantum linear system solvers to achieve exponential quantum advantage. The rational approximation represents the inverse fractional Laplacian as a weighted sum of standard resolvents, transforming the original nonlocal problem into a collection of shifted integer-order partial differential equations. These equations are consolidated into a single large linear system through a modified right-hand side construction that simplifies the quantum implementation. To enable practical implementation, we develop explicit quantum circuits via the Schr\"odingerization technique, which converts the non-unitary dynamics of the linear system into a higher-dimensional Schr\"odinger-type equation, allowing the use of standard Hamiltonian simulation. The circuit construction leverages the decomposition of shift operators to realize the discrete Laplacian and employs controlled operations to implement the select oracle. Under finite difference discretization, we provide detailed algorithmic procedures utilizing block-encoding techniques for the coefficient matrices. A comprehensive complexity analysis demonstrates that the quantum algorithm achieves a dependence on the inverse mesh size \(h^{-1}\) that is independent of the spatial dimension \(d\), in stark contrast to classical methods which suffer from exponential growth in high dimensions. This establishes an exponential quantum advantage for high-dimensional fractional problems, effectively overcoming the curse of dimensionality that limits classical approaches.
\end{abstract}

\textbf{Keywords}: Fractional Laplacian; Rational approximation; Quantum simulation; Block-encoding; Schr\"odingerization


\section{Introduction}

Partial differential equations (PDEs) constitute a cornerstone of scientific computing, providing the mathematical framework for modeling diverse physical phenomena ranging from fluid dynamics and electromagnetic fields to quantum mechanical systems. As the demand for high-fidelity simulations intensifies across scientific and engineering disciplines, the computational burden associated with solving these equations becomes increasingly prohibitive—particularly in high-dimensional settings or when dealing with complex operators.

Within this landscape, fractional partial differential equations have emerged as a powerful modeling tool that extends beyond the capabilities of classical integer-order operators. By incorporating non-integer derivatives, these equations capture memory effects, anomalous transport, and long-range interactions that arise in complex physical systems. Among these, the fractional Laplacian $(-\Delta)^s$, where $s \in (0,1)$ denotes the fractional order, stands out as a fundamental nonlocal generalization of the standard Laplacian $-\Delta$. Over the past decades, this operator has attracted substantial research interest due to its ability to accurately describe phenomena characterized by nonlocality and long-range correlations. Its applications span a remarkable breadth of fields, including anomalous diffusion, L\'evy processes, stochastic dynamics, image processing, groundwater solute transport, financial mathematics, and turbulent flows \cite{Klafter2005Anomalous, Carreras2001Anomalous, Gatto2014Fractional, Vazquez2014Fractional, Du2019Nonlocal, Caffarelli2007Fractional, Nochetto2015Fractional}.

The inherent complexity of fractional operators, however, poses significant challenges for both analytical and numerical treatment. Exact solutions to fractional PDEs are rarely attainable, necessitating the development of robust numerical methodologies. Considerable effort has been devoted to discretizing problems involving fractional Laplacians, yielding a rich variety of approaches including finite element methods, spectral methods, Monte Carlo simulations, meshless techniques, finite difference schemes, and more recently, deep learning-based solvers \cite{Acosta2017Fractional, Acosta2017codeFractional, Yang2023fastNonlocal, Hao2020Spectral, Sheng2023MC, Hao2025Meshless, Hao2021fractionalFDM, YangChen2023fastNonlocalFDM, Gu2022DeepRitz}.

For the fractional Laplacian operator, there exist several distinct definitions of $(-\Delta)^s$ on bounded domains, including spectral definition and integral definition \cite{XuJC2023BPXfrational}, which are not mathematically equivalent. A fundamental difficulty arises from the nonlocal nature of fractional operators. Whereas standard PDE discretizations typically yield sparse matrices with favorable computational properties, fractional equations give rise to dense linear systems whose solution costs scale poorly with problem size. Specifically, regardless of which definition is adopted, and regardless of whether the finite element method or the finite difference method is used for discretization, the resulting stiffness matrices are dense.
 This ``curse of non-locality'' has driven intensive research into specialized numerical techniques capable of mitigating this complexity. Notable among these is the Caffarelli-Silvestre extension technique \cite{Caffarelli2007Fractional}, which transforms the integral fractional Poisson equation into a local elliptic PDE on a domain augmented by one additional dimension. Similarly, time-fractional gradient flows can be reformulated as integer-order problems on extended domains \cite{Eritz23timeFractional}, yielding systems of ordinary differential equations through discretization along both the extended and spatial coordinates. These extension-based approaches have recently inspired novel quantum algorithms built upon the Schr\"odingerization framework \cite{JLY22SchrShort, JLY22SchrLong}. We observe that a recent study \cite{An2026fractional} explores quantum algorithms for fractional reaction-diffusion equations. In that work, the spectral fractional Laplacian is discretized directly via Fourier spectral methods, and several time propagation approaches are examined, including second-order Trotter splitting, time-marching, truncated Dyson series, and a novel linear combination of Hamiltonian simulation in the interaction picture.

Quantum computing presents a fundamentally different paradigm that may circumvent some of these computational bottlenecks. A key advantage lies in its memory efficiency: representing a matrix of dimension $N$ and its associated solution vector requires only $\mathcal{O}(\log N)$ qubits~--~an exponential reduction compared to classical storage requirements. This logarithmic scaling has catalyzed growing interest in quantum algorithms for PDEs, particularly for large-scale simulations that overwhelm classical resources \cite{Cao2013Poisson, Berry2014Highorder, qFEM-2016, Costa2019Wave, Engel2019qVlasov, Childs2020qSpectral, Linden2020heat, Childs2021high, JinLiu2022nonlinear, GJL2022QuantumUQ, JLY2022multiscale, JLY22SchrShort, JLY22SchrLong, analogPDE}.

In this work, we develop a quantum algorithm for the spectral fractional Poisson equation based on rational approximation techniques. The rational approximation represents the inverse fractional Laplacian as a weighted sum of standard resolvents, thereby transforming the original nonlocal problem into a collection of shifted integer-order PDEs. A natural approach then solves each weighted problem independently and combines the results linearly. To fully exploit the advantages of quantum computing, we consolidate all equations into a single large linear system. By appropriately modifying the right-hand side vector, we significantly simplify the practical implementation of the algorithm. Under finite difference discretization, we provide detailed algorithmic procedures utilizing block-encoding techniques for the coefficient matrices. To enable practical implementation, we develop explicit quantum circuits for the resulting Hamiltonian evolution via the Schr\"odingerization technique. This approach converts the non-unitary dynamics of the linear system into a higher-dimensional Schr\"odinger-type equation, allowing the use of standard Hamiltonian simulation techniques. The circuit construction leverages the decomposition of shift operators to realize the discrete Laplacian and employs controlled operations to implement the select oracle. A comprehensive complexity analysis reveals that our quantum algorithm achieves an exponential advantage over classical methods, particularly in high-dimensional settings, as its complexity remains independent of the dimension \(d\) with respect to the inverse mesh size \(h^{-1}\).

The paper is structured as follows. In Section \ref{sec:fractionalPoisson}, we introduce the fractional Laplacian, provide a concise overview of the rational approximation method, establish relevant notation, and discuss its application to the fractional Poisson equation. Section \ref{sec:qalg} begins by presenting several block-encoding-based quantum subroutines, followed by a detailed description of the quantum algorithm for solving the fractional Poisson equation, along with a comprehensive analysis of its computational complexity. Section \ref{sec:circuit} develops explicit quantum circuits via the Schr\"odingerization technique, including detailed constructions of the Hamiltonian evolution and the select oracle. Conclusions are provided in the final section.

\section{Rational approximation of fractional Poisson equation} \label{sec:fractionalPoisson}

Let $\Omega$ be an open, bounded and connected subset of $\mathbb{R}^d$ ($d \geq 1$) with a Lipschitz boundary $\partial \Omega$.  Given $s \in (0, 1)$ and a sufficiently smooth function $f$, we consider quantum solvers for the fractional Poisson equation \cite{Caffarelli2007Fractional, Nochetto2015Fractional, LISCHKE2020109009}:
\begin{equation} \label{ProbFractional}
\begin{cases}
(-\Delta )^s u = f \qquad & \mbox{in}~~\Omega, \\
u = 0 \qquad & \mbox{on}~~\partial\Omega.
\end{cases}
\end{equation}

There exist several distinct definitions of the fractional Laplacian $(-\Delta)^{s}$ on bounded domains (see \cite{LISCHKE2020109009} for a comprehensive review).  The spectral fractional Laplacian is defined through the eigenvalues and eigenfunctions of the classical Laplacian $-\Delta$.  Let $\lambda_j$ denote the eigenvalues of $(-\Delta)$ and $e_j(x)$ the corresponding eigenfunctions.  Given a sufficiently smooth function $v(x)$, the spectral fractional Laplacian is defined as
\begin{equation}\label{sL}
(-\Delta)^{s}v(x)=\sum_j\lambda_j^{s}(v, e_j)_{L^2}e_j(x)
\end{equation}
where $(\cdot, \cdot)_{L^2}$ denotes the $L^2$ inner product on $[0, 1]^d$.  An alternative definition is the integral fractional Laplacian.
Given $s \in (0, 1)$, the integral fractional Laplacian of a function $ u:  \mathbb{R}^d \to \mathbb{R}$ is defined as
\[(-\Delta)^s u = C_{d, s} \text{p.v.} \int_{\mathbb{R}^d} \frac{u(x) - u(\xi)}{|x - \xi|^{d + 2s}} \d \xi, \]
where $ C_{d, s} = \frac{4^s s \Gamma(s + d/2)}{\pi^{d/2} \Gamma(1 - s)} $ is the normalization constant, and ``p.v.'' stands for the Cauchy principal value.  If $\Omega \subset \mathbb{R}^d$ ($d \geq 1$) is an open, bounded, connected set with a Lipschitz boundary, then the fractional Laplace operator on $\Omega$ is defined by zero-extending $u$ to $\mathbb{R}^d$.
For partial differential equations involving fractional Laplacians, direct discretizations~-~for instance, via finite difference or finite element methods~-~typically result in dense linear systems and high computational costs.  To overcome this challenge, efficient numerical methods have been developed that reformulate the spectral fractional diffusion problem using inverses of integer-order elliptic operators.  In this work, we focus on an approach based on rational approximations for the \textit{spectral} fractional Laplacian.

\subsection{Review of the rational approximation}

The literature features several rational approximation algorithms, including the Remez, BURA, and AAA methods.
A prominent example is the AAA algorithm, a seminal and highly efficient method that utilizes the barycentric representation
\[r(z)=\sum_{j=1}^m\frac{w_jf_j}{z-z_j}\Big/\sum_{j=1}^m\frac{w_j}{z-z_j}, \]
where $z_j$ are pre-selected support points, $f_j$ are function values at these points, and $w_j$ are weights.   Compared to the traditional Pad\'e approximation, the barycentric representation exhibits better numerical conditioning when appropriate support points are chosen, leading to reduced numerical errors.  Moreover, the selection of support points is flexible and independent of the poles of the rational function.  This flexibility facilitates the representation of a wide range of rational functions and serves as a foundation for constructing approximation algorithms.

For the inverse fractional Laplacian, it suffices to approximate the power function $x^{-s}$ with $0<s<1$ by a rational function of the form
\begin{equation}\label{ra}
 x^{-s} \approx r_{N_r}(x)=\sum_{\ell=1}^{N_r}\frac{c_{\ell}}{x+b_{\ell}} + c_{\infty}, \qquad x \in [\eta_*, \eta^*],
\end{equation}
where $\eta_*$ and $\eta^*$ are two positive constants and $b_{\ell}, c_{\ell}, c_{\infty} \ge 0$ (see \cite{Khristenko23fractional}). Without loss of generality, we may assume $b_1 \le b_2 \le \cdots \le b_{N_r}$, and $c_{\ell} \ne 0$; otherwise, terms with $c_{\ell} = 0$ can be omitted as they contribute nothing. This approach is particularly advantageous for solving fractional-order equations of the form $\mathcal{A}^su=f$. In this work, we adopt the rational approximation algorithm based on the AAA method introduced in \cite{Nakatsukasa2018AAA}, as it has been demonstrated in \cite{Khristenko23fractional} to exhibit notable efficiency and robustness across various test cases.

For the accuracy of \eqref{ra}, let us introduce some notations.  Let $I$ be a positive interval with left endpoint $\eta>0$ and let $X = L^{\infty}(I)$.
Let $\widetilde{D}$ be be a bounded set of $X$.  We define the symmetric convex hull of $\widetilde{D}$ as
\[B_1(\widetilde{D}) = \overline{\Big\{ \sum_{j=1}^m c_j g_j: g_j \in \widetilde{D}, \quad \sum_{i=1}^m |c_i| \le 1, \quad m \in \mathbb{N} \Big\}}. \]
Using this set, we introduce the variation norm on $X$ as
\[\|f\|_{\mathscr{L}_1(\widetilde{D})} = \mbox{inf} ~\Big\{ c>0:  f \in c B_1(\widetilde{D})\Big\}. \]
and the subspace of $X$ as
\[\mathscr{L}_1(\widetilde{D}) = \Big\{ f \in X:  \|f\|_{\mathscr{L}_1(\widetilde{D})} < \infty\Big\}. \]

\begin{lemma}[rational approximation, E.q~(6.1) in \cite{ Berrut2004BLI}] \label{lem:ra}
For any $ f \in \mathscr{L}_1(\widetilde{D}) $, there exists an algorithm that outputs $ b_{\ell} \ge 0, c_{\ell} > 0 $ for $ {\ell} = 1, \cdots, N_r $ and $c_{\infty} \ge 0$, such that
\[
\|f - \Pi_{N_r} f\|_{L^2(I)} \lesssim \|f\|_{\mathscr{L}_1(\widetilde{D})} K^{-{N_r}},
\]
where $K$ is an constant with $K > 1$, independent of $ f $ and $ N_r $,
\[
(\Pi_{N_r} f)(x) = \sum_{\ell=1}^{N_r} \frac{c_{\ell}}{f(x) + b_{\ell}} + c_{\infty}.
\]
\end{lemma}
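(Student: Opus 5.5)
I read the lemma as a Barron/variation-space approximation statement. The dictionary $\widetilde{D}$ should be thought of as consisting of the elementary resolvent atoms $g_b(x) = (x+b)^{-1}$ with $b \ge 0$, restricted to $I$, together with constants. $\Pi_{N_r} f$ is then a nonnegative combination of $N_r$ such atoms. Read this way, the printed formula with $f(x)+b_\ell$ in the denominator means $x + b_\ell$. The plan has three steps: reduce $f$ to an integral over atoms, quadrature that integral exponentially well, and pass to the $L^2(I)$ bound.

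\emph{Integral representation.} Since $f \in \mathscr{L}_1(\widetilde{D})$ with norm $M := \|f\|_{\mathscr{L}_1(\widetilde{D})}$, we have $f \in (M+\varepsilon) B_1(\widetilde{D})$ for every $\varepsilon > 0$. A weak-$*$ compactness argument on measures over the parameter $b \in [0,\infty]$ should then give a signed measure $\mu$ with total variation at most $M$ such that
\[
f(x) = \int_{[0,\infty]} \frac{1}{x+b}\, \d\mu(b).
\]
Here the point $b = \infty$ absorbs the constant $c_\infty$ after renormalising the atom as $(1+b)/(x+b)$. The model case is $f(x) = x^{-s}$, for which the Stieltjes formula
\[
x^{-s} = \frac{\sin(\pi s)}{\pi} \int_0^\infty \frac{t^{-s}}{x+t}\, \d t
\]
gives an explicit positive measure. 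This positivity is what yields $c_\ell > 0$. For general signed $\mu$, I would split it into its positive and negative parts and treat each part separately.

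\emph{Exponential quadrature.} Substitute $t = \e^{y}$ and apply a sinc/trapezoidal rule with step $k$, truncated to $N_r$ nodes. The integrand $y \mapsto \e^{y}/(x+\e^{y})$ is analytic in the strip $|\operatorname{Im} y| < \pi$, uniformly for $x \ge \eta > 0$. Standard sinc-quadrature theory then gives a discretisation error of order $\e^{-2\pi d/k}$ and a truncation error of order $\e^{-c N_r k}$. Balancing the two errors yields a root-exponential rate $\e^{-c\sqrt{N_r}}$.

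\emph{The main obstacle.} The genuinely geometric rate $K^{-N_r}$ claimed in the lemma cannot come from plain quadrature. I would obtain it in one of two ways.
\begin{itemize}
\item If $I$ is bounded, so that $I = [\eta, \eta^*]$, only $b$ in a compact range matters. The tails of the integral are then well approximated by a single resolvent plus the constant term. On the remaining compact range, one can instead interpolate in barycentric form with Chebyshev-type nodes, which is the cited result of Berrut and Trefethen, equation (6.1). Analyticity of $x \mapsto \int \d\mu(b)/(x+b)$ in a Bernstein ellipse around $I$ that avoids $(-\infty, 0]$ gives the geometric rate, with $K$ the ellipse parameter. This $K$ depends only on $\eta / \sup I$, hence is independent of $f$ and $N_r$.
\item Alternatively, one can invoke Zolotarev/Newman-type best rational approximation results for Stieltjes functions. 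These give errors of order $\e^{-c N_r}$ on bounded intervals, and their poles lie on the negative real axis, so that $b_\ell \ge 0$. Positive residues follow from the interlacing property of Padé/Gauss approximants to Stieltjes functions.
\end{itemize}

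\emph{Passing to $L^2(I)$.} Once the uniform bound on $I$ holds with constant proportional to $M$, it transfers to $L^2(I)$ at the cost of a factor $|I|^{1/2}$. This factor is absorbed into the implied constant in $\lesssim$.
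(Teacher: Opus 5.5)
The paper gives no proof of this lemma. It only cites Berrut--Trefethen (6.1), which, as you note yourself, is the Chebyshev-interpolation estimate behind your first bullet. So nothing in the paper closes the gaps below.

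\textbf{The sign condition.} This is the decisive gap. With your reading of $\widetilde{D}$ and of the denominator, $B_1(\widetilde{D})$ is a \emph{symmetric} hull, so $f(x)=-(x+1)^{-1}$ satisfies $\|f\|_{\mathscr{L}_1(\widetilde{D})}\le 1$. Yet every $\sum_\ell c_\ell/(x+b_\ell)+c_\infty$ with $c_\ell>0$, $b_\ell\ge 0$, $c_\infty\ge 0$ is positive on $I\subset(0,\infty)$. Hence $\|f-\Pi_{N_r}f\|_{L^2(I)}\ge\|f\|_{L^2(I)}$ for every $N_r$. Your step ``split $\mu$ into positive and negative parts and treat each separately'' therefore cannot be completed, because the negative part forces negative $c_\ell$. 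Under your reading the lemma is false for signed $\mu$. What can be proved is the version for $f$ with a \emph{nonnegative} representing measure on a bounded $I=[\eta,\eta^*]$; $x^{-s}$ is such an $f$ and is the only case the paper needs. Your proof should state this restriction explicitly.

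\textbf{Neither route gives the required form.} Chebyshev/barycentric interpolation in $x$ produces a \emph{polynomial}: its barycentric support points are nodes inside $I$, not poles at $-b_\ell\le 0$. The Bernstein-ellipse rate is then a rate for that polynomial, not for a positive sum of resolvents. The tail replacement by ``one resolvent plus a constant'' is also unquantified, and for a fixed cutoff its error does not decay with $N_r$. Your second route takes the rate from minimax (Zolotarev-type) approximants and the sign structure from Pad\'e/Gauss approximants. These are different functions, and you need a single object with both properties.

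\textbf{A fix via Gauss quadrature.} Gauss quadrature of the Stieltjes integral is such an object.
\begin{enumerate}
\item \emph{Change of variable.} With $\tau=1/(1+b)$ one has $(x+b)^{-1}=\tau/(1+(x-1)\tau)$. So $f(x)=\int_{[0,1]}\mathrm{d}\nu(\tau)/\bigl(1+(x-1)\tau\bigr)$ with $\nu\ge 0$ finite. The constant term is mass at $\tau=0$, and for $x^{-s}$ the measure $\nu$ is a probability measure.
\item \emph{Uniform analyticity.} For $x\in[\eta,\eta^*]$ the only singularity $\tau=1/(1-x)$ stays at distance at least $\min\{\eta/(1-\eta),\,1/(\eta^*-1)\}$ from $[0,1]$; read a term as $+\infty$ when it does not apply. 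So all integrands are analytic and uniformly bounded on one Bernstein ellipse $E_\rho$ around $[0,1]$.
\item \emph{Quadrature error.} The $N_r$-point Gauss rule for $\nu$ has weights $w_\ell>0$ and nodes $\tau_\ell\in(0,1]$, and it is exact to degree $2N_r-1$. Its error is at most $2\nu([0,1])$ times the best degree-$(2N_r-1)$ polynomial approximation error on $[0,1]$. This is $\lesssim\nu([0,1])\rho^{-2N_r}$, uniformly on $I$.
\item \emph{Back to resolvents.} Each node contributes $w_\ell/(1+(x-1)\tau_\ell)=c_\ell/(x+b_\ell)$ with $b_\ell=1/\tau_\ell-1\ge 0$ and $c_\ell=w_\ell/\tau_\ell>0$.
\item \emph{The constant.} Each integrand is a reciprocal of the convex combination $(1-\tau)+\tau x$, so it is at least $1/\max\{1,x\}$. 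Hence $\nu([0,1])\le\max\{1,\eta^*\}\|f\|_{L^\infty(I)}\lesssim\|f\|_{\mathscr{L}_1(\widetilde{D})}$.
\end{enumerate}
This gives the estimate with $K=\rho^2$, depending only on $\eta$ and $\eta^*$, and $c_\infty=0$. Your $|I|^{1/2}$ step then finishes the $L^2(I)$ bound.
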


\subsection{Solution given by rational approximation}\label{subsec:RA-FracLaplac}

For the fractional Laplacian $\mathcal{A}^s :=(-\Delta)^s$, where $\mathcal{A} = -\Delta$, according to Lemma \ref{lem:ra}, we can first seek a rational approximation of $x^{-s}$ in \eqref{ra} and then define the approximation
\begin{equation}\label{inversera}
u = \mathcal{A}^{-s}f = (-\Delta)^{-s} f \approx \Big(\sum_{{\ell}=1}^{N_r} c_{\ell} (\mathcal{A}+ b_{\ell} \mathcal{I})^{-1} + c_{\infty}\Big) f
=: \sum_{{\ell}=1}^{N_r} c_{\ell}u^{(\ell)} + c_{\infty} f,
\end{equation}
where $\mathcal{I}$ is the identity operator. For brevity, we set $N_r = 2^{n_r}$.
Suppose that $\mathcal{A}_h : \mathcal{V}_h \to \mathcal{V}_h$ is a discretization of $\mathcal{A}$ with maximum and minimum eigenvalues denoted by $\lambda_{\max} = \lambda_{\max}(\mathcal{A}_h)$ and $\lambda_{\min} = \lambda_{\min}(\mathcal{A}_h)$, respectively.  Then we can introduce the discretized version of \eqref{inversera}:
\begin{equation} \label{rah}
u_h:=\Big(\sum_{\ell=1}^{N_r} c_{\ell}(\mathcal{A}_h+ b_{\ell} \mathcal{I}_h)^{-1} + c_{\infty}\Big) f_h
=: \sum_{\ell=1}^{N_r} c_{\ell} u_h^{(\ell)} + c_{\infty} f_h,
\end{equation}
where $f_h$ is an approximation of $f$.

Direct discretizations, such as the finite difference method and the finite element method for \eqref{ProbFractional}, lead to dense linear systems. By utilizing the rational approximation \eqref{rah}, we can approximate the solution as a linear combination of $c_{\infty} f$ and numerical solutions to several shifted integer-order problems,
\begin{equation} \label{ProbSub}
\begin{cases}
-\Delta u^{(\ell)} + b_{\ell} u^{(\ell)} = f \qquad & \mbox{in}~~\Omega \\
u^{(\ell)} = 0 \qquad & \mbox{on}~~\partial\Omega
\end{cases}, \qquad \ell = 1, 2\cdots, N_r,
\end{equation}
thus overcoming the issue of dense matrix.

 In the following, we consider the finite difference discretizaition for \eqref{ProbSub} with $\ell$ fixed. For simplicity, let $v = u^{(\ell)}$.  We begin by considering the one-dimensional case with the domain taken as $\Omega = (a, b)$, which is uniformly divided into $M$ intervals of length $h = (b-a)/M$.  Let $v_i$ be the approximation of $v$ at $x_i = a + i h$ with $i=0, 1, \cdots, M$.  The unknown values are $v_1, \cdots, v_{M-1}$.  We apply the central difference discretization at $x = x_j$ for $j = 1, \cdots, M-1$:
\[
-\frac{v_{j-1} - 2v_j + v_{j+1}}{h^2} + b_{\ell} v_j = f_j := f(x_j), \quad j = 1, \cdots, M-1.
\]
Let $\bm{v}_h = [v_1, \cdots, v_{M-1}]^\top$.  Then one gets a linear system
\begin{equation}\label{linearsystem}
(A_h + b_{\ell} I_h) \bm{v}_h = \bm{f}_h,
\end{equation}
with
\begin{equation}\label{Ahfh}
A_h =  \frac{1}{h^2}
\begin{bmatrix}
2  &  -1       &           &      &    \\
 -1  & 2       & \ddots    &      &    \\
    &  \ddots  & \ddots    &  \ddots    &    \\
    &          & \ddots    & 2   & -1  \\
    &          &           &  -1   & 2 \\
\end{bmatrix}_{(M-1) \times (M-1)}
, \qquad
\bm{f}_h =
\begin{bmatrix}
f_1 \\
f_2  \\
\vdots\\
f_{M-2} \\
f_{M-1} \\
\end{bmatrix}.
\end{equation}
For brevity, we set $M-1 = 2^{n_x}$.

For $d$ dimensions, the domain is taken as $\Omega = (a, b)^d$.  To construct a spatial discretization, we  introduce $M+1$ spatial mesh points $a = x_{i, 0}<x_{i, 1}<\cdots<x_{i, M} = b$ by $x_{i, j} = a+j h$ in the $x_i$-direction, where $h = (b-a)/M$.  Let $\bm{j} = (j_1, j_2, \cdots, j_d)$, where $j_k=1, \cdots, M-1$ for $k=1, \cdots, d$.  With the central difference applied, we get
\[-\sum\limits_{k=1}^d \frac{ v_{\bm{j}-\bm{e}_k}-2 v_{\bm{j}} + v_{\bm{j}+\bm{e}_k}}{ h^2}
+ b_{\ell} v_{\bm{j}} = f_{\bm{j}} , \]
where $\bm{e}_k = (0, \cdots, 0, 1, 0, \cdots, 0)$ with the $k$-th entry being 1 and $j_k = 1, 2, \cdots, M-1$.   Denote by $\bm{v}_h$ the vector form of the $d$-order tensor $(v_{\bm{j}}) = (v_{j_1, j_2, \cdots, j_d})$:
\[\bm{v}_h = \sum\limits_{\bm{j} } v_{\bm{j}} \ket{\bm{j}}
= \sum\limits_{j_1, j_2, \cdots, j_d = 1}^{M-1} v_{j_1, j_2, \cdots, j_d} \ket{j_1, j_2, \cdots, j_d}, \]
where the index of $\ket{j_k}$ starts from $1$, which differs from the standard notation in quantum computation.
The associated linear system can be represented as
\[\sum\limits_{\bm{j}} \Big(\sum\limits_{k=1}^d \frac{ v_{\bm{j}-\bm{e}_k}-2 v_{\bm{j}} + v_{\bm{j}+\bm{e}_k}}{h^2}  + b_{\ell} v_{\bm{j}}\Big) \ket{\bm{j}} = \sum\limits_{\bm{j}} f_{\bm{j}} \ket{\bm{j}}. \]
As calculated in \cite{JLY22nonlinear, JLLY2024boundary}, one has
\begin{align*}
& \sum\limits_{\bm{j}} v_{\bm{j}-\bm{e}_k}\ket{\bm{j}} = \bm{v}_{0k} + ( I^{\otimes n_x} \otimes \cdots \otimes S^+  \otimes \cdots \otimes I^{\otimes n_x}) \bm{v}, \\
& \sum\limits_{\bm{j}} v_{\bm{j} + \bm{e}_k}\ket{\bm{j}}
 = \bm{v}_{Mk} + ( I^{\otimes n_x} \otimes \cdots \otimes S^-  \otimes \cdots \otimes I^{\otimes n_x}) \bm{v},
\end{align*}
where
\begin{align*}
& \bm{v}_{0k} = \sum\limits_{j_i = 1, i \ne k}^{M-1} u_{j_1, \cdots, 0, \cdots, j_d} \ket{j_1, \cdots, 1, \cdots, j_d} \quad (\mbox{1 is located at the $k$-th position}), \\
& \bm{v}_{Mk} = \sum\limits_{j_i = 1, i \ne k}^{M-1} u_{j_1, \cdots, M, \cdots, j_d} \ket{j_1, \cdots, M-1, \cdots, j_d} \quad (\mbox{$M-1$ is located at the $k$-th position})
\end{align*}
are the vectors generated by left and right boundary values in the $x_k$-direction, $S^+$ and $S^-$ are the shift operators defined by
\[S^+ = \begin{bmatrix}
0    &         &        &    &    \\
1    & 0       &        &    &   \\
     & \ddots  & \ddots &    &   \\
     &         &    1    &  0  &   \\
     &         &        & 1  & 0
\end{bmatrix}, \qquad S^- = (S^+)^\top. \]
For our problem, $\bm{v}_{0k} = \bm{v}_{Mk} = \bm{0}$.  The one-dimensional linear system in \eqref{linearsystem} is then replaced by
\begin{equation}\label{linearsystemd}
(A_{h, d} + b_{\ell} I_{h, d}) \bm{v}_{h, d} = \bm{f}_{h, d},
\end{equation}
where $\bm{v}_{h, d} = \bm{u}_{h, d}^{(\ell)}$ and
\begin{equation}\label{AhdFhd}
A_{h, d} = \sum_{\alpha = 1}^d (A_h)_\alpha, \qquad \bm{f}_{h, d} = \sum\limits_{\bm{j}} f_{\bm{j}} \ket{\bm{j}},
\end{equation}
with $ (\bullet)_\alpha := I^{\otimes (d-\alpha)} \otimes \bullet \otimes I^{\otimes (\alpha-1)}$.  The solution given by the rational approximation is
\begin{equation}\label{rationalsolution}
\bm{u}_h = \sum_{\ell=1}^{N_r} c_{\ell} \bm{u}_{h, d}^{(\ell)} + c_{\infty} \bm{f}_{h, d}.
\end{equation}

Throughout this article, we neglect the rational approximation error in Eq.~\eqref{inversera}.  Let $v(x_j)$ be the exact solutions of $v(x)$ at $x=x_j$ for \eqref{ProbSub} in one dimension.
We have
\[-\frac{v(x_{j-1}) - 2v(x_{j}) + v(x_{j+1})}{h^2}+ b_{\ell} v(x_j) = f(x_j) + r(x_j), \]
where
\[r(x_j) = - \frac{h^2}{12} v_{xxxx}(x_j) + \mathcal{O}(h^4). \]
Let $\bm{r}_h = [r(x_1), r(x_2), \cdots, r(x_{M-1})]^\top $.  Then the corresponding linear system for the exact solution is
\begin{equation}\label{exactlinearsystem}
	(A_h + b_{\ell} I_h) \bm{v} = \bm{f}_h + \bm{r}_h,
\end{equation}
where $\bm{v} = [v(x_1), \cdots, v(x_{M-1})]^\top$ is for $\bm{u}^{(\ell)}$ with $\ell = 1, \cdots, N_r$.

\begin{lemma}\label{lem:FDError}
Let $\Omega = (a, b)^d$ and let $v$ be the solution to \eqref{ProbSub}.  Assume that $v \in C^4(\overline{\Omega})$ and that the fourth-order derivative of $v$ is bounded in each direction.  Let $\bm{u}_h$ be the rational approximation solution defined in \eqref{rationalsolution}, and we denote $\bm{u}$ to be the exact counterpart.  Then there holds
\begin{equation}\label{FDerror}
\|\bm{u} - \bm{u}_h\|_h \lesssim \|\bm{c}\|_1 d h^2,
\end{equation}
where $\|\bm{c}\|_1 = \sum_{\ell = 1}^{N_r} |c_{\ell}| + |c_{\infty}|$, and
\[\|\bm{u} - \bm{u}_h\|_h = \frac{1}{\sqrt{M_d} } \|\bm{u} - \bm{u}_h\|\]
 is the discrete $L^2$ norm, with $M_d = (M-1)^d$ being the number of entries of $\bm{u}$.
\end{lemma}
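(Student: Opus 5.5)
The plan is to reduce the estimate to a per-resolvent truncation-error bound and then sum with the weights $c_\ell$. We neglect the rational approximation error, as the paper does. The exact counterpart is then
\[\bm{u} = \sum_{\ell=1}^{N_r} c_\ell \bm{u}^{(\ell)} + c_\infty \bm{f},\]
where $\bm{u}^{(\ell)}$ is the grid restriction of the exact solution of \eqref{ProbSub} and $\bm{f}$ is the grid restriction of $f$. Since $\bm{f}_{h,d}=\bm{f}$, the $c_\infty$ term cancels exactly. The triangle inequality then gives
\[\|\bm{u}-\bm{u}_h\|_h \le \sum_{\ell=1}^{N_r} |c_\ell|\, \|\bm{u}^{(\ell)}-\bm{u}^{(\ell)}_{h,d}\|_h .\]
It therefore suffices to show $\|\bm{u}^{(\ell)}-\bm{u}^{(\ell)}_{h,d}\|_h \lesssim d h^2$ uniformly in $\ell$. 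The final bound follows because $\sum|c_\ell| \le \|\bm{c}\|_1$.

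For fixed $\ell$, write $\bm{v}$ for the exact grid values and $\bm{v}_{h,d}$ for the discrete solution. Subtracting \eqref{linearsystemd} from the $d$-dimensional analogue of \eqref{exactlinearsystem} gives the error equation
\[(A_{h,d}+b_\ell I_{h,d})(\bm{v}-\bm{v}_{h,d}) = \bm{r}_{h,d}.\]
By Taylor expansion in each direction separately, the residual has entries
\[r_{\bm{j}} = -\frac{h^2}{12}\sum_{k=1}^d \partial_{x_k}^4 v(x_{\bm{j}}) + \mathcal{O}(h^4).\]
The assumed bound on the fourth derivatives in each direction gives $|r_{\bm{j}}| \lesssim d h^2$ entrywise. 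Because the discrete norm is normalized by $\sqrt{M_d}$, this yields $\|\bm{r}_{h,d}\|_h \lesssim d h^2$.

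Next I bound the inverse. $A_h$ is symmetric positive definite with smallest eigenvalue
\[\lambda_{\min}(A_h) = \frac{4}{h^2}\sin^2\!\Big(\frac{\pi h}{2(b-a)}\Big) \ge \frac{4}{(b-a)^2},\]
a positive constant independent of $h$. The eigenvalues of the Kronecker sum $A_{h,d}$ are sums of one-dimensional eigenvalues, so $\lambda_{\min}(A_{h,d}) = d\,\lambda_{\min}(A_h) \ge \lambda_{\min}(A_h)$. Since $b_\ell \ge 0$, the spectral norm of $(A_{h,d}+b_\ell I)^{-1}$ is at most $(b-a)^2/4$, a constant independent of $h$, $d$ and $\ell$. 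The normalized norm $\|\cdot\|_h$ is a scalar multiple of the Euclidean norm, so this bound transfers directly and gives
\[\|\bm{v}-\bm{v}_{h,d}\|_h \le \frac{(b-a)^2}{4}\|\bm{r}_{h,d}\|_h \lesssim d h^2.\]

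The main obstacle is uniformity in $\ell$. The fourth-derivative bounds of $u^{(\ell)}$, and the constant in the $\mathcal{O}(h^4)$ remainder, must not grow with $b_\ell$. I will take this uniformity from the lemma's hypothesis, reading it as a bound uniform over the family $\{u^{(\ell)}\}$. One could try to justify it from $u^{(\ell)} = (\mathcal{A}+b_\ell)^{-1} f$ and elliptic regularity, since $\|(\mathcal{A}+b_\ell)^{-1}\|$ is nonincreasing in $b_\ell$ and commutes with derivatives in the spectral sense. I will, however, not rely on that argument.
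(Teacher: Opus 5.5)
Your proposal is correct and follows the same route as the paper. Both arguments use the error equation $(A_h+b_\ell I_h)\bm{e}_h=\bm{r}_h$, a lower bound on the smallest eigenvalue of $A_h$ to control the inverse uniformly in $h$ and $\ell$ (the paper's $d_i\ge 8$ corresponds to $b-a=1$), the entrywise $\mathcal{O}(h^2)$ truncation residual, and the triangle inequality over $\ell$ with the $c_\infty$ term cancelling. The paper writes out only the one-dimensional case and tacitly assumes the same uniformity in $\ell$ of the fourth-derivative bounds that you flag, so your Kronecker-sum treatment of $A_{h,d}$ makes the $d$-dimensional step explicit; had you kept $\lambda_{\min}(A_{h,d})=d\,\lambda_{\min}(A_h)$, the factor $d$ would even cancel.
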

\begin{proof}
We only consider the one-dimension case.
Let $\bm{e}_h = \bm{v} - \bm{v}_h = \bm{u}^{(\ell)} - \bm{u}_h^{(\ell)}$.  It is simple to find that $\bm{e}_h$ satisfies the following linear system:
\[(A_h + b_{\ell} I_h) \bm{e}_h = \bm{r}_h. \]
The matrix $A_h$ can be diagonalized as $A_h = PD P^\dag$, where $P$ is a unitary matrix and
\begin{equation}\label{DAh}
D= \text{diag}(d_1, \cdots, d_{M-1}), \qquad
d_j =  \frac{4}{h^2} \sin^2 \frac{\pi j}{2M} > 0.
\end{equation}
Then, one has
\begin{equation}\label{ErrorE}
\|\bm{u}^{(\ell)} - \bm{u}_h^{(\ell)}\| = \|\bm{e}_h\| =
\|P^\dag (D + b_{\ell} I)^{-1} P \bm{r}_h\|
\le \max_{i, {\ell}} \Big( \frac{1}{d_i + b_{\ell}} \Big) \|\bm{r}_h\| \lesssim \sqrt{M-1} h^2,
\end{equation}
where the last inequality holds since $8 \le d_i \le 4/h^2$ and $b_{\ell} \ge 0$, and we have used the assumption that $v \in C^4([a, b])$ and $|v_{xxxx}|$ is bounded.  By definition,
\[\|\bm{u} - \bm{u}_h\|
= \Big\|\sum_{\ell=1}^{N_r} c_{\ell} (\bm{u}^{(\ell)} - \bm{u}_h^{(\ell)})\Big\|
\le \sum_{\ell=1}^{N_r} |c_{\ell}| \|\bm{u}^{(\ell)} - \bm{u}_h^{(\ell)}\|
\le \|\bm{c}\|_1 \sqrt{M-1} h^2. \]
This completes the proof.
\end{proof}

The $N_r$ linear systems in \eqref{linearsystemd} can be combined into a large system
\begin{equation}\label{linearsystemmatrix}
H \bm{U}_h = \bm{F},
\end{equation}
where
\[\bm{U}_h = [\bm{u}_{h, d}^{(1)};~ \bm{u}_{h, d}^{(2)};~ \cdots;~ \bm{u}_{h, d}^{(N_r)}], \qquad
\bm{F} = [\bm{f}_{h, d};~ \bm{f}_{h, d};~ \cdots;~ \bm{f}_{h, d}], \]
with the semicolon ``;'' representing the column-wise vectorization, and
\begin{equation}\label{Hmatrix}
H = I^{\otimes n_r} \otimes A_{h, d} + B \otimes I^{\otimes {dn_x}}, \qquad B = \text{diag}(b_1, b_2, ~ \cdots, ~ b_{N_r}),
\end{equation}
where $I^{\otimes n_r}$ and $I^{\otimes {dn_x}} = I_{h,d}$ are $n_r$-qubit and $d n_x$-qubit identity matrices.

To facilitate the reconstruction of the solution to the original system, we can rewrite Eq.~\eqref{linearsystemmatrix} as
\begin{equation}\label{lsmVar}
	\tilde{H} \tilde{\bm{U}}_h = \tilde{\bm{F}},
\end{equation}
where $\tilde{H} = \text{diag}(H, I^{\otimes (n_r+{dn_x})})$, with $H$ defined in \eqref{Hmatrix}. The right-hand side is modified as
\begin{equation}\label{tildeF}
	\tilde{\bm{F}} := [c_1\bm{f}_{h, d};~ c_2\bm{f}_{h, d};~ \cdots;~ c_{N_r}\bm{f}_{h, d};~ c_{\infty}\bm{f}_{h, d};~ \underbrace{\bm{0};~ \cdots;~ \bm{0}}_{N_r-1}] =  \tilde{\bm{c}} \otimes \bm{f}_{h, d},
\end{equation}
where $\tilde{\bm{c}} = [c_1, c_2, \cdots, c_{N_r}, c_{\infty}, 0, \cdots, 0]^\top$ and $\bm{f}_{h, d}$ is given in \eqref{AhdFhd}. Here $N_r-1$ zeros vectors are appended to facilitate the block-encoding of the coefficient matrix.
It is evident that the modified linear system \eqref{lsmVar} gives
\begin{align}\label{tildesol}
	\tilde{\bm{U}}_h
	&:= [\tilde{\bm{u}}_{h, d}^{(1)};~ \tilde{\bm{u}}_{h, d}^{(2)};~ \cdots;~ \tilde{\bm{u}}_{h, d}^{(N_r)};~ \tilde{\bm{u}}_{h, d}^{(N_r+1)};~ \tilde{\bm{u}}_{h, d}^{(N_r+2)};~ \cdots;~ \tilde{\bm{u}}_{h, d}^{(2 N_r)}] \nonumber\\
	&= [c_1\bm{u}_{h, d}^{(1)};~ c_2\bm{u}_{h, d}^{(2)};~\cdots;~ c_{N_r}\bm{u}_{h, d}^{(N_r)};~ c_{\infty} \bm{f}_{h, d};~ \bm{0};~ \cdots;~ \bm{0}].
\end{align}
Therefore, the final approximate solution $\bm{u}_h$ can be represented as
\begin{equation}\label{specialLCU}
	\bm{u}_h = \tilde{\bm{u}}_{h} + c_{\infty} \bm{f}_{h, d}=\sum_{\ell=0}^{2N_r} \tilde{\bm{u}}_{h, d}^{(\ell)},
\end{equation}
where $\tilde{\bm{u}}_{h} = c_1\bm{u}_{h, d}^{(1)} + \cdots + c_{N_r}\bm{u}_{h, d}^{(N_r)}$.

\begin{lemma}\label{lem:H_inv}
Let $\tilde{H}$ be the coefficient matrix of the linear system \eqref{lsmVar}. Then the condition number of $H$ satisfies
\begin{equation}\label{KappaH}
\kappa_{\tilde{H}} = \| \tilde{H} \| \| \tilde{H}^{-1} \| = \mathcal{O} (\|\bm{b}\|_{\max} + d h^{-2} ),
\end{equation}
where $\|\bm{b}\|_{\max} = \max_{\ell}|b_{\ell}|$.
\end{lemma}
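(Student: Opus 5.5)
The plan is to exploit the fact that $\tilde{H}$ is Hermitian positive definite and to read off its extreme eigenvalues from the spectrum of $A_{h,d}$ and of $B$. By construction $\tilde{H} = \text{diag}(H, I)$, so its spectrum is the union of the spectra of $H$ and of the identity block. It therefore suffices to show that the eigenvalues of $H$ lie in an interval $[\lambda_-, \lambda_+]$ with $\lambda_- \ge c > 0$ for a constant $c$ independent of $h$, and $\lambda_+ = \mathcal{O}(\|\bm{b}\|_{\max} + d h^{-2})$. Then $\|\tilde{H}\| = \max(\lambda_+, 1)$ and $\|\tilde{H}^{-1}\| = \max(1/\lambda_-, 1) = \mathcal{O}(1)$.

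First, diagonalize $A_{h,d}$. The one-dimensional matrix satisfies $A_h = P D P^\dag$ with $D$ as in \eqref{DAh}. Because $A_{h,d} = \sum_\alpha (A_h)_\alpha$ is a Kronecker sum, it is diagonalized by $P^{\otimes d}$, and its eigenvalues are $d_{j_1} + \cdots + d_{j_d}$. These lie in $[d\,d_1, d\,d_{M-1}]$. Using $d_1 = \frac{4}{h^2}\sin^2\frac{\pi}{2M} \ge 8/(b-a)^2$ (the paper uses the bound $d_i \ge 8$, e.g.\ when $b-a=1$) and $d_{M-1} \le 4/h^2$, we get $8d/(b-a)^2 \le \lambda(A_{h,d}) \le 4d h^{-2}$.

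Next, note that $I^{\otimes n_r}\otimes A_{h,d}$ and $B \otimes I$ commute, since $B$ is diagonal and the two factors act on different tensor slots. Conjugating by $I \otimes P^{\otimes d}$ therefore makes $H$ diagonal, with entries $b_\ell + \mu_{\bm{j}}$, where $\mu_{\bm{j}}$ runs over the eigenvalues of $A_{h,d}$. Since $b_\ell \ge 0$, these entries lie in $[8d/(b-a)^2,\ \|\bm{b}\|_{\max} + 4dh^{-2}]$. Hence $\lambda_- \ge 8d/(b-a)^2$, which gives $\|\tilde{H}^{-1}\| \le \max\{1,(b-a)^2/(8d)\} = \mathcal{O}(1)$, and $\|\tilde{H}\| \le \max\{1, \|\bm{b}\|_{\max} + 4dh^{-2}\}$. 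Multiplying the two bounds yields \eqref{KappaH}.

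The only delicate step is the lower bound on the smallest eigenvalue of $A_{h,d}$. We need $\sin^2\frac{\pi}{2M} \ge \frac{1}{M^2}$, which follows from $\sin x \ge 2x/\pi$ on $[0,\pi/2]$. The constants then depend only on $b-a$ and are absorbed into $\mathcal{O}(\cdot)$. The identity block of $\tilde{H}$ does not affect the order, because $1 = \mathcal{O}(dh^{-2})$.
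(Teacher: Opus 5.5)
Your proposal is correct and follows the same route as the paper. You bound the eigenvalues of $A_{h,d}$ in $[8d, 4dh^{-2}]$, additionally tracking the $(b-a)^{-2}$ factor that the paper suppresses by effectively taking $b-a=1$; you then shift by $b_\ell \ge 0$ and include the eigenvalue $1$ of the identity block to obtain $\|\tilde{H}^{-1}\| = \mathcal{O}(1)$ and $\|\tilde{H}\| = \mathcal{O}(\|\bm{b}\|_{\max} + dh^{-2})$.

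One harmless slip: $\sin x \ge 2x/\pi$ only gives $d_1 \ge 4/(b-a)^2$, so the constant $8/(b-a)^2$ you quote needs the monotonicity of $\sin x/x$ together with $M \ge 2$, but either constant suffices for the $\mathcal{O}$-bound.
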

\begin{proof}
Let $d_i^*$ be an eigenvalue of $A_{h, d}$. One can check that
\[8d ~\le~ d_i^* ~\le~ \frac{4d}{h^2}, \qquad i=1,\cdots,M_d = (M-1)^d.\]
Since $B$ is a diagonal matrix and $0 \le b_1 \le \cdots \le b_{N_r}$, following the proof of Lemma \ref{lem:FDError}, one has
\begin{align*}
&\| \tilde{H}^{-1} \| \le \max\Big\{ \max_{i, {\ell}} \Big( \frac{1}{d^*_i + b_{\ell}} \Big), 1 \Big\}
\le \max\Big\{ \max_i \Big( \frac{1}{d^*_i + b_1} \Big), 1 \Big\}
\le \max\Big\{ \frac{1}{8d + b_1}, 1 \Big\} = 1 , \\
&\| \tilde{H} \| \le \max\Big\{\max_{i, {\ell}} ( d^*_i + b_{\ell} ), 1 \Big\}= \mathcal{O} (\|\bm{b}\|_{\max} + d h^{-2}),
\end{align*}
as required.
\end{proof}

\section{Quantum algorithm for the fractional Poisson equation} \label{sec:qalg}

\subsection{Block-encoding of the coefficient matrix}

The coefficient matrix of the linear system \eqref{linearsystem} or \eqref{linearsystemmatrix} is sparse, which allows us to construct it using the sparse access model \cite{Gilyen2019QSVD, Chakraborty2019blockEncode, Lin2022Notes}.  Let $A$ be a matrix with at most $s_r$ nonzero entries in any row and at most $s_c$ nonzero entries in any column, where we set $s_r = s_c = s$ for simplicity in what follows.  The sparse access model is described below.
\begin{definition}\label{def:sparsequery}
Let $A=(a_{ij})$ be an $n$-qubit matrix with at most $s$ non-zero elements in each row and column.  Assume that $A$ can be accessed through the following oracles:
\[O_r \ket{l} \ket{i} = \ket{r(i, l)} \ket{i}, \qquad O_c \ket{l} \ket{j} = \ket{c(j, l)} \ket{j},\]
\[O_A \ket{0}\ket{i, j} = \Big( a_{ij} \ket{0} + \sqrt{1 - |a_{ij}|^2} \ket{1} \Big) \ket{i, j}, \]
where $r(i, l)$  and $c(j, l)$ give the $l$-th non-zero entry in the $i$-th row and $j$-th column.
where $r(i, l)$ and $c(j, l)$ return the column index of the $l$-th non-zero entry in row $i$ and the row index of the $l$-th non-zero entry in column $j$, respectively.  The oracle $O_A$ can be equivalently replaced by
\[\tilde{O}_A \ket{0^a} \ket{i, j} = \ket{\tilde{a}_{ij}} \ket{i, j}, \]
where $\tilde{a}_{ij}$ represents an $a$-bit binary encoding of the matrix element $a_{ij}$.
\end{definition}

It is noteworthy that block-encoding provides a more generalized input model for matrix operations in quantum computing compared to sparse encoding \cite{Gilyen2019QSVD, Chakraborty2019blockEncode, Lin2022Notes, ACL2023LCH2, JLMY2025SchOptimal}.  This approach not only serves as an input model for quantum algorithms but also facilitates various matrix operations, thereby enabling the block-encoding of matrices with more complex structures.

\begin{definition}\label{def:blockencoding}
Given an $n$-qubit matrix $ A\in \mathbb C^{N\times N}$ with $N = 2^n$, and define the projection operator $ \Pi = \bra{0^m} \otimes I_n $, where $ I_n $ denotes the $ n $-qubit identity operator.  If there exist positive constants $ \alpha $ and $ \varepsilon $, and an $ (m+n) $-qubit unitary operator $ U_A $, such that the following inequality holds:
\[\| A - \alpha \Pi U_A \Pi^\dag \| = \| A - \alpha (\bra{0^m} \otimes I_n) U_A (\ket{0^m} \otimes I_n) \| \le \varepsilon,\]
then $ U_A $ is referred to as an $ (\alpha, m, \varepsilon) $-block-encoding of the matrix $ A $.
\end{definition}


The block-encoding of a sparse matrix can be constructed by leveraging its sparse access model \cite{Gilyen2019QSVD, Chakraborty2019blockEncode, Lin2022Notes, Lin2023timeMarching}.

\begin{lemma}[Block-encoding of sparse-access matrices] \label{lem:sparse2block}
Let $A=(a_{ij})$ be an $n$-qubit matrix with at most $s$ non-zero elements in each row and column.  Assume that  $\|A\|_{\max} = \max_{ij} |a_{ij}| \le 1$ and $A$ can be accessed through the sparse access model in Definition \ref{def:sparsequery}.  Then we have an implementation of $(s, n+3, \epsilon)$-block-encoding of $A$ with a single use of $O_r$ and $O_c$, two uses of $\tilde{O}_A$ and additionally using $\mathcal{O}(n+\log^{2. 5}(s/\epsilon))$ elementary gates and $\mathcal{O}(a+\log^{2. 5}(s/\epsilon))$ ancilla qubits, where $a$ represents $a$-bit binary representations of the entries of $A$.
\end{lemma}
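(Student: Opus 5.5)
The plan is the standard sparse-access construction of Gilyén--Su--Low--Wiebe \cite{Gilyen2019QSVD}: write $A/s$ as the product $U_L^\dagger U_R$ of two state-preparation isometries and track the cost of each oracle call. On $n$ system qubits, a register of $n$ qubits for the index $l$ (padded to $\lceil\log s\rceil\le n$), and a small constant number of extra flag qubits, define
\[
U_R:\ \ket{0}\ket{0^n}\ket{j}\mapsto \frac{1}{\sqrt{s}}\sum_{l\in[s]}\ket{0}\ket{c(j,l)}\ket{j},\qquad
U_L:\ \ket{0}\ket{0^n}\ket{i}\mapsto \frac{1}{\sqrt{s}}\sum_{l\in[s]}\Big(\bar a_{r(i,l) i}\ket{0}+\sqrt{1-|a_{r(i,l)i}|^2}\ket{1}\Big)\ket{i}\ket{r(i,l)},
\]
with a swap of the two $n$-qubit registers so that both states live in the basis $\ket{\cdot}\ket{\text{row}}\ket{\text{col}}$.

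\textbf{Step 1: building the isometries.} Each isometry is built as follows:
\begin{itemize}
\item apply a uniform superposition over $l\in[s]$ on the index register, costing $\mathcal{O}(n)$ gates (e.g.\ Hadamards plus a comparator, or exact amplitude amplification when $s$ is not a power of two);
\item call $O_c$ once for $U_R$ and $O_r$ once for $U_L$, which accounts for the single use of each;
\item in one of the two, rotate the amplitude qubit according to the matrix entry.
\end{itemize}
One then checks that $\bra{0}\bra{0^n}\bra{i}\,U_L^\dagger U_R\,\ket{0}\ket{0^n}\ket{j} = a_{ij}/s$. The amplitude is nonzero only when $j$ lies in the support of row $i$ and $i$ lies in the support of column $j$, and each such pair contributes exactly one term $\tfrac{1}{\sqrt s}\cdot\tfrac{1}{\sqrt s}\cdot a_{ij}$. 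Counting qubits gives $n$ system qubits, an $n$-qubit index register, and about three flags, hence an $(s,n+3,\cdot)$-block-encoding.

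\textbf{Step 2: implementing the amplitude oracle.} Since only $\tilde O_A$ (binary entries) is available, the amplitude oracle $O_A$ is implemented via $\tilde O_A$, a controlled rotation, and then $\tilde O_A^\dagger$ to uncompute. This accounts for the two uses of $\tilde O_A$.

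\textbf{Main obstacle: the controlled rotation to precision $\epsilon/s$.} The hard step is implementing the map $\ket{\tilde a}\ket{0}\mapsto\ket{\tilde a}(a\ket{0}+\sqrt{1-|a|^2}\ket{1})$ to accuracy $\epsilon/s$, since the error is amplified by the normalization $s$. Here $\|A\|_{\max}\le1$ is exactly what makes this a valid rotation. The first approach to try is to compute $\arcsin a$ (and, for complex $a$, its phase) into an ancilla register by reversible arithmetic, then apply bit-controlled $R_y$ rotations. Computing $\arcsin$ to $b=\mathcal{O}(\log(s/\epsilon))$ bits with reversible arithmetic costs $\mathcal{O}(b^{2.5})$ gates and ancillas, giving the claimed $\mathcal{O}(n+\log^{2.5}(s/\epsilon))$ gates and $\mathcal{O}(a+\log^{2.5}(s/\epsilon))$ ancillas.

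\textbf{Step 3: error accounting.} The final bound follows by summing the rotation error on both sides and multiplying by $\alpha=s$, which gives total error at most $\epsilon$.
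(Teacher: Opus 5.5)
Your proposal is correct and is essentially the Gily\'en--Su--Low--Wiebe construction: you write $A/s = U_L^\dagger U_R$, use one call each to $O_r$ and $O_c$, apply $\tilde O_A$ in a compute--rotate--uncompute pattern, and pay $\mathcal{O}(\log^{2.5}(s/\epsilon))$ arithmetic for a rotation of precision $\epsilon/s$; the paper states this lemma only by citing that work and gives no proof of its own, so your route matches the one it relies on. One slip to fix: in $U_L$ the amplitude must be $\bar a_{i\,r(i,l)}$ rather than $\bar a_{r(i,l)\,i}$, since as written the inner product gives $a_{ji}/s$ and you would block-encode $A^\top$ instead of $A$.
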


If $O_r$, $O_c$ and $O_A$ are exact, then we can implement an $(s, n+1, 0)$-block-encoding of $A$ with a single use of $O_r$, $O_c$ and $O_A$ and additionally using $\mathcal{O}(n)$ elementary gates and $\mathcal{O}(1)$ ancilla qubits.

The following lemma describes how to perform matrix arithmetic on block-encoded matrices \cite{Gilyen2019QSVD, Chakraborty2019blockEncode}.

\begin{lemma}\label{lem:arithmeticBE}
Let $A_i$ be $n$-qubit matrix for $i=1, 2$.  If $U_i$ is an $(\alpha_i, m_i, \varepsilon_i)$-block-encoding of $A_i$ with gate complexity $T_i$, then
\begin{enumerate}
\item $A_1 + A_2$ has an $(\alpha_1+\alpha_2, m_1+m_2, \alpha_1\varepsilon_2 + \alpha_2\varepsilon_1)$-block-encoding that can be implemented with gate complexity $\mathcal{O}(T_1 + T_2)$.

\item $A_1 A_2$ has an $(\alpha_1\alpha_2, m_1+m_2, \alpha_1\varepsilon_2 + \alpha_2\varepsilon_1)$-block-encoding that can be implemented with gate complexity $\mathcal{O}(T_1 + T_2)$.

\item $A_1^\dag$ has an $(\alpha_1, m_1, \varepsilon_1)$-block-encoding that can be implemented with gate complexity $\mathcal{O}(T_1)$.

\item $A_1\otimes A_2$ has an $(\alpha_1\alpha_2, m_1+m_2, \alpha_1\epsilon_2 + \alpha_2\epsilon_1)$-block-encoding that can be implemented with gate complexity $\mathcal{O}(T_1+T_2)$.
\end{enumerate}
\end{lemma}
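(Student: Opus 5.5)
We prove each of the four items by writing down an explicit unitary built from $U_1$ and $U_2$ and then checking that its top-left block, once rescaled, approximates the target matrix. Throughout we use the decomposition $\alpha_i\Pi_i U_i\Pi_i^\dag = A_i + E_i$ with $\|E_i\|\le\varepsilon_i$, where $\Pi_i=\bra{0^{m_i}}\otimes I_n$. Every construction uses $U_1$ and $U_2$ a constant number of times, plus at most a constant number of additional gates. This gives the gate counts $\mathcal{O}(T_1+T_2)$ and $\mathcal{O}(T_1)$ immediately.

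\textbf{Adjoint and tensor product.} For item 3 we take $U_1^\dag$. Then $\Pi_1 U_1^\dag\Pi_1^\dag=(\Pi_1U_1\Pi_1^\dag)^\dag$, and taking adjoints does not change the operator norm. So $U_1^\dag$ is an $(\alpha_1,m_1,\varepsilon_1)$-block-encoding of $A_1^\dag$.

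For item 4 we take $U_1\otimes U_2$, reordering qubit registers so that both ancilla registers sit in front. The projected block is then $(\Pi_1U_1\Pi_1^\dag)\otimes(\Pi_2U_2\Pi_2^\dag)$. Multiplying by $\alpha_1\alpha_2$ gives $(A_1+E_1)\otimes(A_2+E_2)$. We then use $\|X\otimes Y\|=\|X\|\|Y\|$ and $\|A_i\|\le\alpha_i$. The latter follows because $\|\Pi U\Pi^\dag\|\le 1$, with the error term absorbed into the $\varepsilon_i$ bookkeeping. The cross terms are then bounded by $\alpha_1\varepsilon_2+\alpha_2\varepsilon_1$, up to the second-order term $\varepsilon_1\varepsilon_2$, which we absorb as in the standard statement.

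\textbf{Product.} For item 2 we use $(I_{m_1}\otimes U_2)(U_1\otimes I_{m_2})$, with each $U_i$ acting on its own ancilla register and on the shared system register. Projecting both ancilla registers onto $\ket{0}$ gives $(\Pi_2U_2\Pi_2^\dag)(\Pi_1U_1\Pi_1^\dag)$. The step that makes this work is that the register of one unitary is untouched by the other. As a result, inserting $\ket{0^{m_2}}\bra{0^{m_2}}$ between the two factors is legitimate: $U_1$ never moves the $m_2$ ancilla out of $\ket{0}$. The error estimate is the same as for the tensor product, with $\|XY\|\le\|X\|\|Y\|$ in place of $\|X\otimes Y\|=\|X\|\|Y\|$. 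As written this encodes $A_2A_1$; we obtain $A_1A_2$ by swapping the order in which the two unitaries are applied.

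\textbf{Sum.} Item 3 is where the construction is most delicate. We use the linear-combination-of-unitaries construction. Introduce one extra ancilla qubit $a$ and a state-preparation unitary $V\ket0=(\sqrt{\alpha_1}\ket0+\sqrt{\alpha_2}\ket1)/\sqrt{\alpha_1+\alpha_2}$. Define
\[
W=(V^\dag\otimes I)\big(\ket0\bra0\otimes U_1\otimes I+\ket1\bra1\otimes I\otimes U_2\big)(V\otimes I),
\]
padding the ancilla registers so that both unitaries act on the combined $m_1+m_2$ ancillas. Projecting every ancilla onto $\ket0$ yields
\[
\frac{\alpha_1\Pi_1U_1\Pi_1^\dag+\alpha_2\Pi_2U_2\Pi_2^\dag}{\alpha_1+\alpha_2}=\frac{A_1+A_2+E_1+E_2}{\alpha_1+\alpha_2}.
\]
This gives normalization $\alpha_1+\alpha_2$ and error $\varepsilon_1+\varepsilon_2$, which sits within the stated $\alpha_1\varepsilon_2+\alpha_2\varepsilon_1$ whenever $\alpha_i\ge1$, the regime of interest.

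The main obstacle is the ancilla count. The construction uses one extra qubit for the selector, so it gives $m_1+m_2+1$ ancillas, not the stated $m_1+m_2$. To reach the stated count I would first try to reuse an existing ancilla register: controlled on the selector, the idle register is left in $\ket0$, so a single shared register of size $\max(m_1,m_2)+1\le m_1+m_2$ suffices whenever $m_1+m_2\ge2$. The degenerate case $m_1=m_2=0$ cannot be handled this way, since it would mean summing two unitaries with no room for a selector; there I would need one extra qubit and treat it separately.
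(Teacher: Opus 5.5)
Your four constructions are the standard ones: a one-qubit LCU for the sum, sequential application on disjoint ancilla registers for the product, $U_1^\dag$ for the adjoint, and $U_1\otimes U_2$ with reordered registers for the tensor product. The paper gives no proof of its own. It only cites Gily\'en et al.\ and Chakraborty et al., whose arguments are exactly these, so your route is the paper's route.

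The discrepancies you flag are defects of the printed statement, not of your argument.
\begin{itemize}
\item The usual form of item 1 is an $(\alpha_1+\alpha_2,\max(m_1,m_2)+1,\varepsilon_1+\varepsilon_2)$-block-encoding. Your restriction to $\alpha_i\ge1$ is genuinely needed. With $\varepsilon_2=0$ and $\alpha_2<1$, the claimed error $\alpha_2\varepsilon_1$ is smaller than the uncertainty $\varepsilon_1$ in $A_1$ itself.
\item Items 2 and 4 hold only up to a term $\varepsilon_1\varepsilon_2$, because in general $\|A_i\|\le\alpha_i+\varepsilon_i$, not $\alpha_i$. For instance, take $U_i=I$ and $A_i=(\alpha_i+\varepsilon_i)I$. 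Then your constructions have error exactly $\alpha_1\varepsilon_2+\alpha_2\varepsilon_1+\varepsilon_1\varepsilon_2$.
\end{itemize}
In the paper every $\varepsilon_i=0$, so none of this affects how the lemma is used.

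There is one slip to fix in your ancilla count for the sum. The shared-register version uses $\max(m_1,m_2)+1$ ancillas. This is at most $m_1+m_2$ exactly when $\min(m_1,m_2)\ge1$, not whenever $m_1+m_2\ge2$. For example, $m_1=2$, $m_2=0$ needs $3>2$ ancillas. So the exceptional case is any input encoding with no ancilla qubits, not only $m_1=m_2=0$. This is again harmless for the paper, where all encodings have $m_i\ge1$.

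Also, your paragraph on the sum calls it item 3; it is item 1.
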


In the following, we proceed to construct the block-encoding of the coefficient matrixes $H$ and $\tilde{H}$ in \eqref{linearsystemmatrix} and \eqref{lsmVar}, respectively. For brevity, we assume that $M-1 = 2^{n_x}$ and $N_r = 2^{n_r}$, and that $\mathcal{U}_{B}$ and $\mathcal{U}_{A_h}$ are exact.

\begin{theorem}\label{thm:inputmodel}
Given an $(\alpha_{B}, m_B, \epsilon_B)$-block-encoding $\mathcal{U}_{B}$ of $B$ and an $(\alpha_{A_h}, m_{A_h}, \epsilon_{A_h})$-block-encoding $\mathcal{U}_{A_h}$ of $A_h$, where
\[\alpha_{B} \simeq \|\bm{b}\|_{\max}, \quad m_B = n_r+1, \quad \epsilon_B = 0, \]
\[\alpha_{A_h} \simeq h^{-2}, \quad m_{A_h} = n_x+1, \quad \epsilon_{A_h} = 0, \]
then we can construct an $(\alpha_{\tilde{H}}, m_{\tilde{H}}, \epsilon_{\tilde{H}})$-block-encoding, denoted by $\mathcal{U}_{\tilde{H}}$, of the coefficient matrix
\begin{equation}\label{tildeH}
	\tilde{H} = \operatorname{diag}(H, I^{\otimes (n_r+{dn_x})}) = |0\rangle\langle 0| \otimes H + |1\rangle\langle 1| \otimes I^{\otimes (n_r+{dn_x})},
\end{equation}
where $H = I^{\otimes n_r} \otimes A_{h,d} + B \otimes I^{\otimes {dn_x}}$ is the matrix appearing in the $d$-dimensional linear system \eqref{linearsystemmatrix}, with
\begin{align}
& \alpha_{\tilde{H}} = \mathcal{O}(\|\bm{b}\|_{\max} + dh^{-2}), \label{alphaH}\\
& m_{\tilde{H}} = d(d+1)n_x + 2n_r + d^2 + 2, \nonumber \\
& \epsilon_{\tilde{H}} = 0. \nonumber
\end{align}
This construction requires $\mathcal{O}(d)$ queries to $\mathcal{U}_{A_h}$ and $\mathcal{O}(1)$ queries to $\mathcal{U}_B$.
\end{theorem}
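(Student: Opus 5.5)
We build $\mathcal{U}_{\tilde H}$ bottom-up using only the arithmetic rules of Lemma~\ref{lem:arithmeticBE}, tracking the normalization constant and the ancilla count at each step.

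\textbf{Step 1: the Laplacian terms.} Fix a direction $\alpha$. The term $(A_h)_\alpha = I^{\otimes (d-\alpha)}\otimes A_h \otimes I^{\otimes(\alpha-1)}$ is obtained from the tensor-product rule (item 4). The identity factors have trivial $(1,1,0)$-block-encodings, each using one ancilla, so that all $d$ factors carry the same format. This gives a block-encoding of $(A_h)_\alpha$ with constant $\alpha_{A_h}$ and
\[
(d-1)\cdot 1 + (n_x+1)
\]
ancillas. We then sum over $\alpha=1,\dots,d$ with item 1. This yields a block-encoding of $A_{h,d}$ with constant $d\alpha_{A_h}\simeq dh^{-2}$, ancilla count of order $d(n_x+d)$, and $d$ queries to $\mathcal{U}_{A_h}$.

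\textbf{Step 2: assembling $H$.} We tensor the encoding of $A_{h,d}$ with the trivial encoding of $I^{\otimes n_r}$, and tensor $\mathcal{U}_B$ with the trivial encoding of $I^{\otimes dn_x}$ (item 4). Adding the two with item 1 gives a block-encoding of $H$ with
\[
\alpha_H = \alpha_B + d\alpha_{A_h} = \mathcal{O}(\|\bm b\|_{\max}+dh^{-2})
\]
and zero error, since all ingredients are exact. This step makes one query to $\mathcal{U}_B$.

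\textbf{Step 3: the direct sum $\tilde H$.} To encode $|0\rangle\langle0|\otimes H + |1\rangle\langle1|\otimes I$, I would use a controlled selection: controlled on the new leading qubit being $|0\rangle$, apply $\mathcal{U}_H$; otherwise apply identity. This needs care with normalization, because the $H$ block carries the factor $\alpha_H$ while the identity block carries $1$. I would handle it by writing
\[
\tilde H = |0\rangle\langle0|\otimes H + |1\rangle\langle1|\otimes I
\]
as a sum of two tensor products and applying items 1 and 4 again. This is legitimate because $|0\rangle\langle0|$ and $|1\rangle\langle1|$ each have exact $(1,1,0)$-block-encodings. The resulting constant is $\alpha_H+1=\mathcal{O}(\|\bm b\|_{\max}+dh^{-2})$, which matches \eqref{alphaH}. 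Alternatively, one can rescale the identity block inside a single-qubit rotation, so that the combined unitary is $\mathrm{SELECT}$ conjugated by a PREPARE on one ancilla.

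\textbf{Main obstacle: ancilla bookkeeping.} The hard part is not correctness but reproducing the exact count $m_{\tilde H}=d(d+1)n_x+2n_r+d^2+2$. That expression suggests the authors pad ancilla registers to common sizes when summing, since LCU requires equal-size blocks. I would therefore count per step, padding each summand to the largest ancilla register before combining:
\begin{itemize}
\item the $d$ tensor-product terms each use about $n_x+d$ ancillas, plus $\lceil\log d\rceil$ LCU selection qubits (bounded by $d$);
\item tensoring and adding $B$ uses $n_r+1$ ancillas plus a selection qubit;
\item the outer direct sum uses a further $n_r$-sized padding and one selection qubit.
\end{itemize}
Adding these and bounding crudely (for example $(n_x+1)d$ per term times $d$ terms, plus $d n_x$) should give $d(d+1)n_x+2n_r+d^2+2$. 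I expect the remaining effort to lie in matching this count term by term, rather than in any conceptual difficulty. The query counts, $\mathcal{O}(d)$ to $\mathcal{U}_{A_h}$ and $\mathcal{O}(1)$ to $\mathcal{U}_B$, follow directly from Lemma~\ref{lem:arithmeticBE}.
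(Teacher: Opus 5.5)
Your construction is correct, and Steps 1--2 follow the paper's proof. You embed $A_h$ in each direction by tensoring with identities, sum the $d$ terms with item 1 of Lemma~\ref{lem:arithmeticBE} to get normalization $d\alpha_{A_h}$, tensor with $I^{\otimes n_r}$, tensor $\mathcal{U}_B$ with $I^{\otimes dn_x}$, and add. This gives $\alpha_H=d\alpha_{A_h}+\alpha_B$ with $d$ queries to $\mathcal{U}_{A_h}$ and one to $\mathcal{U}_B$.

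The real difference is Step 3. The paper only says $\tilde H$ is realized ``via a controlled-$H$ operation'' and records a $(\max\{\alpha_H,1\},m_H+1,0)$-encoding. A bare controlled-$\mathcal{U}_H$ encodes $|0\rangle\langle0|\otimes H+|1\rangle\langle1|\otimes\alpha_H I$, so the paper's extra ancilla must be the one that damps the identity branch by $1/\alpha_H$. That is exactly your ``alternative''. Your main route is an LCU of $|0\rangle\langle0|\otimes H$ and $|1\rangle\langle1|\otimes I$, built from exact $(1,1,0)$-encodings of the projectors (e.g.\ $|0\rangle\langle0|=(I+\sigma_z)/2$). It costs the slightly larger constant $\alpha_H+1$ and a couple more ancillas. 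In return it makes explicit the normalization mismatch that the paper leaves implicit. Both routes give $\alpha_{\tilde H}=\mathcal{O}(\|\bm{b}\|_{\max}+dh^{-2})$ with no extra queries.

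Your ``main obstacle'' paragraph, however, does not establish the stated $m_{\tilde H}$. The ingredients you list ($\lceil\log d\rceil$ selection qubits, an $n_r$-sized padding in the outer sum) do not come from Lemma~\ref{lem:arithmeticBE} as stated, which simply adds ancilla counts. And ``should give'' is not a derivation.

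No term-by-term matching is needed. Adjoining idle ancillas turns an $(\alpha,m,0)$-block-encoding into an $(\alpha,m',0)$-block-encoding for every $m'\ge m$, so it suffices to show your count is at most $m_{\tilde H}$. Give the identity factors no ancillas (they are unitary); your padding of them only inflates the count. The lemma's rules then yield:
\begin{itemize}
\item $d(n_x+1)$ ancillas for $A_{h,d}$;
\item $d(n_x+1)+n_r+1$ for $H$;
\item $d(n_x+1)+n_r+3$ for $\tilde H$ via your LCU.
\end{itemize}
The last is at most $m_{\tilde H}$ whenever $n_x\ge 1$.

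The paper's exact number is just a generous bookkeeping of the same steps. It charges $dm_{A_h}$ ancillas to each $(A_h)_\alpha$ (hence $d^2 m_{A_h}$ to $A_{h,d}$), $n_r$ to $I^{\otimes n_r}$, and $m_B+dn_x$ to $B\otimes I^{\otimes dn_x}$. This gives $m_H=d(d+1)n_x+2n_r+d^2+1$, plus one for the direct sum.
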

\begin{proof}
Lemmas \ref{lem:sparse2block} and \ref{lem:arithmeticBE} allow us to construct an $(\alpha_{B}, n_r+1, 0)$-block-encoding $\mathcal{U}_{B}$ of $B$ and an $(\alpha_{A_h}, n_x+1, 0)$-block-encoding $\mathcal{U}_{A_h}$ of $A_h$, where $\alpha_{A_h} \simeq h^{-2}$, with gate complexity $T_{A_h} =\mathcal{O}(n_x). $

We bound the complexity for $A_{h, d}$ in the following steps, where $A_{h, d}$ is defined in \eqref{linearsystemd}:
\begin{itemize}
\item It is clear that each term $(\bullet)_\alpha := I^{\otimes(d-\alpha)} \otimes \bullet \otimes I^{\otimes(\alpha-1)}$ in the summation
$A_{h, d} = \sum_{\alpha=1}^{d} (A_h)_\alpha$ admits an $(\alpha_{A_h}, d m_{A_h}, 0)$-block-encoding with gate complexity $T_{A_h}$.

\item By Lemma \ref{lem:arithmeticBE}, $A_{h, d}$ possesses an $(d \alpha_{A_h}, d^2 m_{A_h}, 0)$-block-encoding with gate complexity $dT_{A_h}$.
\end{itemize}

For the block-encoding of $H = I^{\otimes n_r} \otimes A_{h, d} + B \otimes I^{\otimes {dn_x}}$, we have the following steps:
\begin{enumerate}[(a)]
\item According to the previous discussion, the matrix $I^{\otimes n_r} \otimes A_{h, d}$ has a $(d \alpha_{A_h}, n_r+d^2 m_{A_h}, 0)$-block-encoding with gate complexity
\[T_{h, d} = dT_{A_h} = \mathcal{O}(d n_x). \]

\item The matrix $B \otimes I^{\otimes {dn_x}}$ has an $(\alpha_{B}, m_B + dn_x, 0)$-block-encoding. By Lemma \ref{lem:arithmeticBE}, $H = I^{\otimes n_r} \otimes A_{h, d} + B \otimes I^{\otimes {dn_x}}$ has an $(\alpha_H, n_H, \epsilon_H)$-block-encoding, where
\begin{align*}
	& \alpha_H = d \alpha_{A_h} + \alpha_{B}
	= \mathcal{O}(d h^{-2}+\|\bm{b}\|_{\max}), \\
	& m_H = n_r + d^2m_{A_h} + m_B + dn_x = d(d+1)n_x + 2n_r + d^2 + 1, \\
	& \epsilon_H = 0,
\end{align*}
with gate complexity $T_H = \mathcal{O}( d n_x+n_r )$.
\end{enumerate}

Since $\tilde{H}$ is defined as in \eqref{tildeH}, its block encoding can be realized via a controlled-$H$ operation. In other words, by Lemma~\ref{lem:arithmeticBE}, $\tilde{H}$ admits an $(\max\{\alpha_H, 1\}, m_H + 1, \epsilon_H)$-block-encoding, with gate complexity $T_{\tilde{H}} = T_H = \mathcal{O}(d n_x + n_r)$. Consequently, the gate complexity $T_{\tilde{H}}$ can be interpreted as $\mathcal{O}(d)$ queries to $\mathcal{U}_{A_h}$ and $\mathcal{O}(1)$ queries to $\mathcal{U}_B$, which completes the proof.
\end{proof}

\subsection{Recover the solution} \label{subsec:recover}

Before stating the main result, we first give the following optimal quantum algorithm for solving linear systems.
\begin{lemma}\cite{Costa2021QLSA}\label{lem:optQLSA}
Consider the linear system $A\bm{x} = \bm{b}$. Suppose we are given a block-encoding oracle for $A$ and a state-preparation oracle that generates the quantum state $\ket{\bm{b}}$.  Then there exists a quantum procedure that outputs the normalized solution state $\ket{A^{-1}\bm{b}}$ with error at most $\varepsilon$, by making
\begin{equation}
\mathcal{O}\bigl(\kappa_A \log(1/\varepsilon)\bigr)
\end{equation}
queries to the provided oracles, where $\kappa_A$ is the condition number of $A$.
\end{lemma}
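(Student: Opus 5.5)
The statement is the optimal quantum linear system algorithm of Costa et al., which the paper cites rather than proves. I would therefore reconstruct the standard argument: reduce to adiabatic state preparation and then filter. The plan has four steps.

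\emph{Step 1: normalize and Hermitize.} Assume $\|A\|\le 1$, so the singular values lie in $[1/\kappa_A,1]$; this holds after rescaling by the block-encoding normalization. If $A$ is not Hermitian, pass to the Hermitian dilation $\begin{bmatrix}0 & A\\ A^\dag & 0\end{bmatrix}$. By Lemma~\ref{lem:arithmeticBE} its block-encoding costs $\mathcal{O}(1)$ queries to the oracle for $A$, and it has the same condition number.

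\emph{Step 2: adiabatic interpolation.} Following Subaşı--Somma--Orsucci and An--Lin, define $A(f)=(1-f)Z\otimes I+fX\otimes A$ on an enlarged space, so that $\ket{\bm{b}}$ lies in a null space of $A(0)$. Then define $H(f)=Q_b A(f)Q_b$ with $Q_b=I-\ket{\bm{b}}\bra{\bm{b}}$. The null space of $H(1)$ contains $\ket{A^{-1}\bm{b}}$, up to an ancilla. The gap of $H(f)$ along the path is bounded below by roughly $(1-f)+f/\kappa_A$. I would then choose the schedule $f(t)$ tuned to this gap, so that the local adiabatic condition $|f'|\lesssim \Delta(f)^2$ holds. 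Combined with the discrete adiabatic theorem, this should give total evolution time $\mathcal{O}(\kappa_A)$ to reach constant fidelity. This is the main obstacle. The usual continuous adiabatic theorem yields an extra $\log\kappa_A$ factor or a $1/\varepsilon$ dependence. Removing it requires the discrete-walk adiabatic theorem of Costa et al., whose error bound involves $\sum$ of $\|D^{(2)}\|/\Delta^2$-type terms; with the tuned schedule these sum to $\mathcal{O}(\kappa_A)$. I would implement the walk through block-encodings of $H(f)$, each built from $\mathcal{O}(1)$ queries to the $A$-oracle and the $\bm{b}$-preparation oracle, together with qubitization.

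\emph{Step 3: boost to accuracy $\varepsilon$.} Once a state with constant overlap with the zero-eigenspace target is available, apply eigenstate filtering. Use a QSP/QSVT polynomial of degree $\mathcal{O}(\kappa_A\log(1/\varepsilon))$ that approximates a projector onto the zero eigenvalue of $H(1)$ and suppresses eigenvalues of magnitude $\ge 1/\kappa_A$ below $\varepsilon$; a suitably scaled Chebyshev construction achieves this. Since the overlap is constant, this needs only $\mathcal{O}(1)$ rounds of amplitude amplification or repetition. In total, Steps 2 and 3 use $\mathcal{O}(\kappa_A)+\mathcal{O}(\kappa_A\log(1/\varepsilon))=\mathcal{O}(\kappa_A\log(1/\varepsilon))$ queries.

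\emph{Step 4: read off the solution.} Discard or postselect the ancilla that marks the null space, leaving $\ket{A^{-1}\bm{b}}$ with error at most $\varepsilon$.
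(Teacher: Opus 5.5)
The paper gives no proof of this lemma; it quotes Costa et al.~\cite{Costa2021QLSA}. Your outline (Hermitian dilation, discrete adiabatic theorem along a gap-adapted schedule, then eigenstate filtering) is that reference's route. The genuine gap is the concrete operator in Step 2, on which both your gap bound and your filtering target rest.

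First, $A(0)=Z\otimes I$ is unitary, so nothing lies in its null space. More seriously, take $H(f)=Q_bA(f)Q_b$ with $Q_b=I-\ket{\beta}\bra{\beta}$, for whatever embedding $\ket{\beta}$ of $\ket{\bm b}$ you intend, and write $v=c\ket{\beta}+w$ with $w\perp\ket{\beta}$. Then $H(f)v=Q_bA(f)w$, which vanishes iff $w\propto A(f)^{-1}\ket{\beta}$ and $w\perp\ket{\beta}$. Hence $\ker H(f)=\mathrm{span}\{\ket{\beta}\}$ unless $\bra{\beta}A(f)^{-1}\ket{\beta}=0$. For instance, with $\ket{\beta}=\ket{+,\bm b}$ and $A$ positive definite, $\bra{\beta}A(f)^{-1}\ket{\beta}=f\bra{\bm b}A\bigl((1-f)^2+f^2A^2\bigr)^{-1}\ket{\bm b}>0$ on $(0,1]$. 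So for $f\in(0,1]$ the only zero mode is the $f$-independent vector $\ket{\beta}$, killed trivially because $Q_b\ket{\beta}=0$. Adiabatic following therefore transports nothing, and $\ket{+}\otimes\ket{A^{-1}\bm b}\notin\ker H(1)$ unless $\ket{\bm b}$ is an eigenvector of $A$. The gap claim fails too: in this example the zero eigenvalue of $H(0)$ is already doubly degenerate.

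The projector has to sit between the $A(f)$ factors, not around them. As in An--Lin and Costa et al., take $H(f)=\sigma_+\otimes A(f)Q_b+\sigma_-\otimes Q_bA(f)$, with $\sigma_+=\ket{0}\bra{1}$, $\sigma_-=\ket{1}\bra{0}$ and $Q_b=I-\ket{+,\bm b}\bra{+,\bm b}$. Then $\ket{0}\otimes v\in\ker H(f)$ iff $v\propto A(f)^{-1}\ket{+,\bm b}$, which moves from $\ket{-,\bm b}$ at $f=0$ to $\ket{+}\otimes\ket{A^{-1}\bm b}$ at $f=1$. The other zero mode $\ket{1}\ket{+,\bm b}$ is $f$-independent and orthogonal to this path; that is what justifies filtering onto the two-dimensional $\ker H(1)$ in Step 3. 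The nonzero eigenvalues are $\pm$ the nonzero singular values of $Q_bA(f)$, hence of modulus at least $\sigma_{\min}(A(f))\ge\sqrt{(1-f)^2+f^2/\kappa_A^2}$. For the Hermitian positive definite $\tilde H$ of this paper you can drop the dilation and the $Z/X$ path: take $A(f)=(1-f)I+fA$ and $Q_b=I-\ket{\bm b}\bra{\bm b}$, which gives exactly your gap $1-f+f/\kappa_A$.

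Two smaller corrections. The schedule should be $f'\propto\Delta(f)^p$ with $1<p<2$ (e.g.\ $p=3/2$). Saturating the Roland--Cerf condition $|f'|\lesssim\Delta^2$ makes the bulk terms of the discrete adiabatic bound scale like $\kappa_A\log\kappa_A$, reintroducing the logarithm you want to remove. Also, rescaling by the block-encoding normalization $\alpha$ only places the singular values in $[1/(\alpha\|A^{-1}\|),1]$, so the query count is really $\mathcal{O}(\alpha\|A^{-1}\|\log(1/\varepsilon))$. This is harmless here, because $\alpha_{\tilde H}$ is of the same order as the bound on $\|\tilde H\|$ in Lemma~\ref{lem:H_inv} and $\|\tilde H^{-1}\|\le 1$.
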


Compared to the original system \eqref{linearsystemmatrix}, the formulation \eqref{specialLCU} corresponds to a special LCU structure, thereby simplifying the LCU procedure for the original system. In fact, let
\[\mathcal{D} = \begin{bmatrix}
	I^{\otimes dn_x} & I^{\otimes dn_x} & \cdots & I^{\otimes dn_x}\\
	* & * & \cdots & *\\
	\vdots & \vdots & \ddots & \vdots \\
	* & * & \cdots & *
\end{bmatrix}, \]
where $\mathcal{D}$ contains $2N_r = 2^{(n_r+1)}$ blocks per row and per column. It is evident that
\begin{equation}\label{DUh}
\mathcal{D} \tilde{\bm{U}}_h = [\bm{u}_h; ~*; \cdots; ~*],
\end{equation}
with the first block corresponding to \eqref{rationalsolution}. A simple choice of $\mathcal{D}$ is
\[\mathcal{D} = \sqrt{2N_r} \text{Had}^{\otimes (n_r+1)} \otimes I^{\otimes dn_x},\]
 where $\text{Had} = \dfrac{1}{\sqrt{2}} \begin{bmatrix} 1 & 1 \\ 1 & -1 \end{bmatrix}$ is the Hadamard gate.

For the approximate solution, we assume the following oracles:
\begin{itemize}
  \item the state-preparation oracles
  \[
O_{\tilde{c}}: \ket{0^{(n_r+1)}} \to \ket{\tilde{\bm{c}}}, \qquad
O_f: \ket{0^{dn_x}} \to \ket{\bm{f}_{h,d}};
\]
  \item  an $(\alpha_{A_h}, m_{A_h}, 0)$-block-encoding $\mathcal{U}_{A_h}$ of $A_h$ in \eqref{Ahfh}, where
\[\alpha_{A_h} \simeq h^{-2}, \quad m_{A_h} = n_x+1; \]

  \item an $(\alpha_{B}, m_B, 0)$-block-encoding $\mathcal{U}_{B}$ of $B$ in \eqref{Hmatrix}, where
\[\alpha_{B} \simeq \|\bm{b}\|_{\max}, \quad m_B = n_r+1. \]
\end{itemize}
From Eq.~\eqref{tildeF}, the state $\ket{\tilde{\bm{F}}} := \tilde{\bm{F}} / \|\tilde{\bm{F}}\|$ can be prepared as
\[
O_{\tilde{F}} := O_{\tilde{c}} \otimes O_f: \quad \ket{0^{(n_r+1)}}\ket{0^{dn_x}} \to \ket{\tilde{\bm{c}}} \otimes \ket{\bm{f}_{h,d}} = \ket{\tilde{\bm{F}}}.
\]
According to Theorem \ref{thm:inputmodel}, we can construct an $(\alpha_{\tilde{H}}, m_{\tilde{H}}, 0)$-block-encoding $\mathcal{U}_{\tilde{H}}$ for the coefficient matrix ${\tilde{H}}$ in \eqref{lsmVar}, where
\[\alpha_{\tilde{H}} = \mathcal{O}(\|\bm{b}\|_{\max} + dh^{-2}), \quad m_{\tilde{H}} = d(d+1)n_x + 2n_r + d^2 + 2,\]
utilizing $\mathcal{O}(d)$ queries to $\mathcal{U}_{A_h}$ and $\mathcal{O}(1)$ queries to $\mathcal{U}_B$.

For $\ket{{\tilde{H}}^{-1} \tilde{\bm{F}}} = \ket{\tilde{\bm{U}}_h}$, by Lemma \ref{lem:optQLSA}, there is a quantum algorithm, denoted by $\mathcal{U}_{\text{inv}}$, such that
\begin{equation}\label{Uunitary}
\ket{\tilde{\bm{F}}} \quad \xrightarrow{\mathcal{U}_{\text{inv}}} \quad  \ket{\tilde{\bm{U}}_h^q},
\end{equation}
where $\ket{\tilde{\bm{U}}_h^q}$ is an approximation of $\ket{\tilde{\bm{U}}_h}$ and satisfies
\begin{equation}\label{epsilon}
\|\ket{\tilde{\bm{U}}_h^q} - \ket{\tilde{\bm{U}}_h}\| \lesssim \varepsilon,
\end{equation}
with $\varepsilon$ to be determined. According to Lemmas \ref{lem:optQLSA} and \ref{lem:H_inv}, this makes
\begin{equation}\label{CH}
\mathcal{O}\Big(\kappa_{\tilde{H}} \log \frac{1}{\varepsilon}\Big) = \mathcal{O}\Big((\|\bm{b}\|_{\max} + d h^{-2}) \log \frac{1}{\varepsilon}\Big)=: \mathcal{O}(C_{\tilde{H}})
\end{equation}
queries to $O_{\tilde{F}}$ and $\mathcal{U}_{\tilde{H}}$. In terms of the provided oracles, this amounts to
\begin{itemize}
  \item $\mathcal{O}(C_{\tilde{H}})$ queries to $O_{\tilde{c}}$ and $O_f$,

  \item $\mathcal{O}(d C_{\tilde{H}})$ queries to $\mathcal{U}_{A_h}$,

  \item and $\mathcal{O}(C_{\tilde{H}})$ queries to $\mathcal{U}_B$.
\end{itemize}

For $\ket{\bm{u}_h}$, from \eqref{DUh} we get
\begin{align}\label{finalstate}
\ket{\tilde{\bm{U}}_h^q} \quad &\xrightarrow{\text{Had}^{\otimes (n_r+1)} \otimes I^{\otimes dn_x}} \quad  \frac{1}{\eta_1}\ket{0^{(n_r+1)}} \ket{\bm{u}_h^q} + \ket{\bot},
\end{align}
where $\ket{\bm{u}_h^q}$ is an approximation of $\ket{\bm{u}_h}$, $\ket{\bot}$ is orthogonal to $\ket{0^{(n_r+1)}} \ket{\bm{u}_h^q}$ and
\[\eta_1 = \frac{\sqrt{2N_r} \| \tilde{\bm{U}}_h^q\|}{\norm{\bm{u}_h^q}} \approx \frac{\sqrt{2N_r} \| \tilde{\bm{U}}_h\|}{\norm{\bm{u}_h}}.\]

For simplicity, we denote the above procedure as
\begin{equation}\label{Vfinal}
\ket{0^{(n_r+1)}}\ket{0^{dn_x}} \quad \xrightarrow{ V } \quad  \frac{1}{\eta_1}\ket{0^{(n_r+1)}} \ket{\bm{u}_h^q} + \ket{\bot},
\end{equation}
where $V = (\text{Had}^{\otimes (n_r+1)} \otimes I^{\otimes dn_x}) \mathcal{U}_{\text{inv}} O_{\tilde{F}}$.

The explicit time complexity necessitates an estimation of $\| \tilde{\bm{U}}_h\|$ and $\| \bm{u}_h\|$.
\begin{lemma}\label{lem:U_h_norm}
Let $\tilde{\bm{U}}_h$ be the solution to \eqref{lsmVar} and let $\bm{u}_h$ be defined in \eqref{specialLCU}. If $b_1 \le \cdots \le b_{N_r}$, where each $b_i\ge 0$, then there holds
\begin{align*}
\| \tilde{\bm{U}}_h\| \le \|\bm{c}\|_2 \|\bm{f}_{h,d}\|,   \qquad
\| \bm{u}_h\| \ge \frac{c_1}{\|\bm{b}\|_{\min} + 8d} \|\bm{f}_{h, d}\|,
\end{align*}
where $\|\bm{b}\|_{\min} = \min_{\ell}|b_{\ell}| = b_1$ and $\|\bm{c}\|_2 = (c_1^2 + \cdots + c_{N_r}^2 + c_{\infty}^2)^{1/2}$.
\end{lemma}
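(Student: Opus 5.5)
The plan is to treat both estimates through the block-diagonal structure of $\tilde{H}$ and the simultaneous diagonalisation of $A_{h,d}$ used in the proofs of Lemmas \ref{lem:FDError} and \ref{lem:H_inv}. The upper bound follows cleanly from this. For the lower bound I can only reach the constant $c_1/(\|\bm{b}\|_{\min}+8d)$ under an additional condition on $\bm{f}_{h,d}$ that the statement does not contain. Without that condition the argument gives only the weaker $c_1/(b_1+4dh^{-2})$.

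\textbf{Upper bound.} By \eqref{tildesol}, $\tilde{\bm{U}}_h$ has blocks $c_\ell(A_{h,d}+b_\ell I)^{-1}\bm{f}_{h,d}$ for $\ell\le N_r$, then $c_\infty\bm{f}_{h,d}$, then zero blocks. Hence
\[
\|\tilde{\bm{U}}_h\|^2=\sum_{\ell=1}^{N_r} c_\ell^2\|(A_{h,d}+b_\ell I)^{-1}\bm{f}_{h,d}\|^2+c_\infty^2\|\bm{f}_{h,d}\|^2 .
\]
Write $A_{h,d}=P D^* P^\dag$, where $D^*$ collects the eigenvalues $8d\le d_i^*\le 4dh^{-2}$ from Lemma \ref{lem:H_inv}. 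Since $b_\ell\ge 0$, this gives $\|(A_{h,d}+b_\ell I)^{-1}\|\le 1/(8d+b_\ell)\le 1$. Summing yields $\|\tilde{\bm{U}}_h\|\le\|\bm{c}\|_2\|\bm{f}_{h,d}\|$.

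\textbf{Lower bound, structure.} I use that $\bm{u}_h=\big(\sum_\ell c_\ell(A_{h,d}+b_\ell I)^{-1}+c_\infty I\big)\bm{f}_{h,d}=:R\,\bm{f}_{h,d}$, where $R$ is symmetric positive definite and every summand is positive semidefinite because $c_\ell>0$ and $c_\infty\ge 0$. In the eigenbasis $P$, write $\hat{\bm{f}}=P^\dag\bm{f}_{h,d}$. Then the $i$-th component of $P^\dag\bm{u}_h$ is
\[
\Big(\sum_\ell \frac{c_\ell}{d_i^*+b_\ell}+c_\infty\Big)\hat f_i\;\ge\;\frac{c_1}{d_i^*+b_1}\,\hat f_i\quad\text{in modulus}.
\]
Dropping all terms except $\ell=1$ is legitimate because every term is nonnegative. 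This gives
\[
\|\bm{u}_h\|^2\ge c_1^2\sum_i\frac{|\hat f_i|^2}{(d_i^*+b_1)^2}.
\]

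\textbf{Main obstacle: the remaining sum.} To reach exactly $c_1/(b_1+8d)$ I would need $d_i^*$ replaced by its minimum $8d$ in the denominator. That replacement points the wrong way: $(d_i^*+b_1)^{-1}\le(8d+b_1)^{-1}$, so it produces an upper bound on the sum, not a lower bound. From the diagonal form alone the honest consequence is $\|\bm{u}_h\|\ge c_1\|\bm{f}_{h,d}\|/(b_1+4dh^{-2})$, which is weaker than the statement and carries the $h^{-2}$ factor.

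To obtain the stated constant I would add a hypothesis and argue as follows. Assume $\bm{f}_{h,d}$ is the grid restriction of a smooth $f$, so that its weight $|\hat f_i|^2$ is carried, up to a fixed fraction, by the lowest modes with $d_i^*=\mathcal{O}(d)$. On those modes $(d_i^*+b_1)^{-1}\gtrsim(8d+b_1)^{-1}$, which gives the stated bound up to a constant factor. This smoothness assumption does not appear in the statement. Verifying it, or finding a substitute, is where I expect the proof to need the most care, and I would state it explicitly rather than hide it.

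Granted that step, the conclusion is
\[
\|\bm{u}_h\|\ge\frac{c_1}{\|\bm{b}\|_{\min}+8d}\|\bm{f}_{h,d}\|,
\]
and this feeds directly into the estimate of $\eta_1$ in \eqref{finalstate}.
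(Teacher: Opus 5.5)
Your upper bound is the paper's argument written blockwise. The paper uses $\|\tilde{\bm{U}}_h\|\le\|\tilde{H}^{-1}\|\,\|\tilde{\bm{F}}\|$ with $\|\tilde{H}^{-1}\|\le 1$ and $\|\tilde{\bm{F}}\|=\|\tilde{\bm{c}}\|\,\|\bm{f}_{h,d}\|=\|\bm{c}\|_2\|\bm{f}_{h,d}\|$.

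For the lower bound, your objection is correct, and it locates exactly where the paper's proof breaks. The paper also reduces to $\min_i\sum_\ell c_\ell/(d_i^*+b_\ell)$. It then applies Abel's inequality to get $\sum_\ell c_\ell a_\ell\ge m_c a_1=c_1a_1$ with $a_1=1/(d_i^*+b_1)$. Since all $c_\ell>0$, this is just your step of keeping only the $\ell=1$ term. The paper then asserts $m_c a_1=m_c/(8d+b_1)$ ``for any fixed $i$'', i.e.\ it replaces $d_i^*$ by its lower bound $8d$ in a denominator. That gives $a_1\le 1/(8d+b_1)$, which is the wrong direction. Taking the minimum over $i$ correctly lands on the largest eigenvalue and gives your $c_1/(b_1+d^*_{\max})\ge c_1/(b_1+4dh^{-2})$.

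No missing idea can rescue the stated constant, because the inequality is false as written. Take $N_r=1$, $c_\infty=0$, and $\bm{f}_{h,d}$ an eigenvector of $A_{h,d}$ for $d^*_{\max}=\frac{4d}{h^2}\cos^2\frac{\pi}{2M}\ge\frac{2d}{h^2}$. Then $\|\bm{u}_h\|=c_1\|\bm{f}_{h,d}\|/(b_1+d^*_{\max})<c_1\|\bm{f}_{h,d}\|/(b_1+8d)$ whenever $h<1/2$.

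What you should correct is your closing sentence. Your smoothness hypothesis yields $\|\bm{u}_h\|\ge C_f\,c_1\|\bm{f}_{h,d}\|/(b_1+8d)$ with an $f$-dependent constant $C_f$, not the displayed inequality. Along your route $C_f$ is at most $1$, because the retained term obeys $\|c_1(A_{h,d}+b_1I)^{-1}\bm{f}_{h,d}\|\le c_1\|\bm{f}_{h,d}\|/(b_1+d^*_{\min})\le c_1\|\bm{f}_{h,d}\|/(b_1+8d)$. Equality holds only in the degenerate case where $\bm{f}_{h,d}$ lies in an eigenspace of eigenvalue exactly $8d$. Only the discarded $\ell\ge 2$ and $c_\infty$ contributions could push the ratio higher, and nothing in the hypotheses controls them.

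So state the result in one of two ways: unconditionally, with $b_1+4dh^{-2}$ in place of $\|\bm{b}\|_{\min}+8d$; or conditionally, with $C_f$. Then propagate that choice into $\eta_1$ in \eqref{finalstate}. In the unconditional case, the query bound $C_h$ of the main theorem grows by the factor $(b_1+4dh^{-2})/(b_1+8d)$.
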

\begin{proof}
By Lemma \ref{lem:H_inv} and the definition of $\tilde{\bm{F}}$,
\begin{align*}
\| \tilde{\bm{U}}_h \| &= \| {\tilde{H}}^{-1} \tilde{\bm{F}} \| \le \| {\tilde{H}}^{-1} \| \| \tilde{\bm{F}} \|
\le \|\tilde{\bm{F}}\| = \|\tilde{\bm{c}} \otimes \bm{f}_{h,d}\| = \|\bm{c}\|_2 \|\bm{f}_{h,d}\|.
\end{align*}

Let $H_\ell := A_{h, d} + b_{\ell} I_{h, d}$. Then the solution of \eqref{linearsystemd} can be expressed as
$\bm{u}_{h, d}^{(\ell)} = H_\ell^{-1} \bm{f}_{h, d}$.
Since $c_{\infty} \ge 0, c_\ell > 0$, following the proof of Lemmas \ref{lem:FDError} and \ref{lem:H_inv}, one has
\begin{align*}
\|\bm{u}_h\|
= \|\tilde{\bm{u}}_h + c_{\infty} \bm{f}_{h,d}\|
\ge \min_i \left| \frac{c_1}{d_i^* + b_1} + \frac{c_2}{d_i^* + b_2} + \cdots + \frac{c_{N_r}}{d_i^* + b_{N_r}} \right| ~ \|\bm{f}_{h, d}\|.
\end{align*}
For a fixed $i$, set $a_j := \dfrac{1}{d_i^* + b_j}$ and define
\[S_k := \sum_{j=1}^k c_j, \qquad m_c := \min\{S_1,\cdots,S_{N_r}\}, \qquad M_c := \max\{S_1,\cdots,S_{N_r}\}.\]
Since $d_i^* > 0$ and $0 \le b_1 \le \cdots \le b_{N_r}$, we have $a_1 \ge a_2 \ge \cdots \ge a_{N_r} > 0$. By the Abel's inequality (see Theorem 1 in \cite{Mitrinovic1970}),
\[0 < m_c a_1 \le \sum_{k=1}^{N_r} c_k a_k \le M_c a_1,\]
that is,
\[\left| \frac{c_1}{d_i^* + b_1} + \frac{c_2}{d_i^* + b_2} + \cdots + \frac{c_{N_r}}{d_i^* + b_{N_r}} \right|
\ge  m_c a_1 = \frac{m_c}{8d + b_1}\]
for any fixed $i$. Since $c_i > 0$, we have $m_c = c_1$. This completes the proof.
\end{proof}

\begin{remark}
Using the empirical interpolation method (rEIM) from \cite{Aidi2025rEIM}, we can construct a rational approximation with $c_{\infty} = 0$. However, the resulting coefficients $c_i$ are not guaranteed to be positive. Our algorithm remains applicable, but this introduces difficulties in establishing a lower bound for $\| \bm{u}_h\|$ in Lemma~\ref{lem:U_h_norm}, which in turn complicates the estimation of the number of repetitions required in the subsequent procedure.
\end{remark}

\subsection{Time complexity analysis} \label{sec:complexity}

For the finite difference discretization, we analyze the error in the discrete $L^2$ norm, defined as
\[\norm{\bm{u}}_h := \Big( h^d \sum_{\bm{i}} u_{\bm{i}}^2 \Big)^{1/2} = h^{d/2} \norm{\bm{u}}, \]
where $\bm{i} = (i_1, \cdots, i_d)$ is a multi-index, and $h=(b-a)/M$.  In what follows we assume homogeneous boundary conditions.  For simplicity, we estimate the complexity primarily in terms of the number of queries to the oracle $\mathcal{U}_{A_h}$, which is the block-encoding of ${A_h}$.

Note we neglect the rational approximation error in \eqref{inversera}, only consider the error between the right-hand side of \eqref{inversera} and \eqref{rah} . The objective of this section is to quantify the computational cost of constructing a quantum solution $\bm{u}_h^q$ that satisfies
\[\norm{\bm{u}- \bm{u}_h^q}_h \leq \delta = \delta(h), \]
where
\begin{equation}\label{uhq}
\bm{u}_h^q = \sum_{\ell=1}^{2N_r} (\tilde{\bm{u}}_{h, d}^{(\ell)})^q,
\end{equation}
with $(\tilde{\bm{u}}_{h, d}^{(\ell)})^q$ being the quantum counterpart of \eqref{tildesol}.

The total error can be decomposed as
\[\norm{\bm{u} - \bm{u}_h^q}_h \leq \underbrace{\norm{\bm{u} - \bm{u}_h}_h}_{\text{discretization error}} \;+\; \underbrace{\norm{\bm{u}_h - \bm{u}_h^q}_h}_{\text{quantum approximation error}}. \]
We require each term on the right-hand side to be $\mathcal{O}(\delta)$, where $\bm{u}_h^q$ is derived from \eqref{linearsystemmatrix} by replacing the exact solution $\bm{U}_h$ with its quantum approximation $\bm{U}_h^q$.

The first error term arises from the finite-difference discretization.  According to the discussion in Lemma \ref{lem:FDError}, we have
\[ \norm{\bm{u} - \bm{u}_h}_h \lesssim \|\bm{c}\|_1 d h^2 =:\delta. \]
For the second term, we obtain from \eqref{linearsystemmatrix} that
\begin{align*}
\|\bm{u}_h - \bm{u}_h^q\|_h
& \simeq h^{d/2} \| \bm{u}_h- \bm{u}_h^q\|
  = h^{d/2}  \Big\| \sum_{\ell=1}^{2N_r} (\tilde{\bm{u}}_{h, d}^{(\ell)} - (\tilde{\bm{u}}_{h, d}^{(\ell)})^q) \Big\|  \\
& \le h^{d/2}  \sum_{\ell=1}^{2N_r} \Big\| \tilde{\bm{u}}_{h, d}^{(\ell)} - (\tilde{\bm{u}}_{h, d}^{(\ell)})^q\Big\|
  \lesssim h^{d/2} \|\tilde{\bm{U}}_h - \tilde{\bm{U}}_h^q\| .
\end{align*}
Requiring the right-hand side to be $\mathcal{O}(\delta)$, we can take
\begin{equation}\label{quantumerror}
\|\tilde{\bm{U}}_h - \tilde{\bm{U}}_h^q\| \lesssim  \frac{\delta}{h^{d/2}} = \| \bm{c} \|_1 d h^{2-d/2}.
\end{equation}

According to the discussion in Section \ref{subsec:recover}, we have the following main result.

\begin{theorem}
Let $\delta = \|\bm{c}\|_1 d h^2 $. Suppose that we are given the oracles in Section \ref{subsec:recover}. If $\|\bm{f}_{h, d}\|_h \approx \|f\|_{L^2} = \mathcal{O}(1)$, then there is a quantum algorithm that prepares an approximation of the normalized solution $\ket{\bm{u}_h}$, denoted as $\ket{\bm{u}^q_h}$, with $\Omega(1)$ probability and a flag indicating success, using
\begin{itemize}
  \item $\mathcal{O}(C_h)$ queries to $O_{\tilde{c}}$ and $O_f$,

  \item $\mathcal{O}(d C_h)$ queries to $\mathcal{U}_{A_h}$,

  \item and $\mathcal{O}(C_h)$ queries to $\mathcal{U}_B$,
\end{itemize}
where
\[ C_h = \mathcal{O}\Big( \sqrt{2N_r}(\|\bm{b}\|_{\max} + dh^{-2}) (\|\bm{b}\|_{\min} + 8d) c_1^{-1} \|\bm{c}\|_2 \log \frac{1}{h}\Big).  \]
The unnormalized vector $\bm{u}^q_h$ defined in \eqref{uhq} provides an $\mathcal{O}(\delta)$-approximation of $\bm{u}_h$ in the discrete $L^2$ norm.
\end{theorem}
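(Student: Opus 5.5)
The plan is to compose the map $V$ of \eqref{Vfinal} with amplitude amplification, then count queries. First I fix the QLSA accuracy $\varepsilon$. By \eqref{quantumerror} the unnormalized error must satisfy $\|\tilde{\bm{U}}_h - \tilde{\bm{U}}_h^q\| \lesssim \|\bm{c}\|_1 d h^{2-d/2}$. Lemma \ref{lem:optQLSA}, however, controls only the normalized error \eqref{epsilon}, so I rescale by $\|\tilde{\bm{U}}_h\|$. Using Lemma \ref{lem:U_h_norm} together with $\|\bm{f}_{h,d}\| = h^{-d/2}\|\bm{f}_{h,d}\|_h = \mathcal{O}(h^{-d/2})$, we have $\|\tilde{\bm{U}}_h\| \le \|\bm{c}\|_2 h^{-d/2}$. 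It therefore suffices to take $\varepsilon \simeq \|\bm{c}\|_1 d h^2/\|\bm{c}\|_2$. Since $\|\bm{c}\|_1 \ge \|\bm{c}\|_2$, this gives $\log(1/\varepsilon) = \mathcal{O}(\log(1/h))$. By \eqref{CH}, one application of $\mathcal{U}_{\text{inv}}$ then costs $\mathcal{O}((\|\bm{b}\|_{\max}+dh^{-2})\log\frac1h)$ queries to $O_{\tilde F}$ and $\mathcal{U}_{\tilde H}$. By Theorem \ref{thm:inputmodel}, this becomes $\mathcal{O}(d)$ times as many queries to $\mathcal{U}_{A_h}$ and $\mathcal{O}(1)$ times as many to $\mathcal{U}_B$. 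Each query to $O_{\tilde F}$ is one query to each of $O_{\tilde c}$ and $O_f$.

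Second, I bound the success amplitude of $V$. After $V$, the flag $\ket{0^{(n_r+1)}}$ appears with amplitude $1/\eta_1 \approx \|\bm{u}_h\|/(\sqrt{2N_r}\|\tilde{\bm{U}}_h\|)$. Lemma \ref{lem:U_h_norm} gives
\[
\eta_1 \lesssim \sqrt{2N_r}\,\frac{\|\bm{c}\|_2\|\bm{f}_{h,d}\|}{c_1\|\bm{f}_{h,d}\|/(\|\bm{b}\|_{\min}+8d)} = \sqrt{2N_r}\,(\|\bm{b}\|_{\min}+8d)\,c_1^{-1}\|\bm{c}\|_2 .
\]
Amplitude amplification then needs $\mathcal{O}(\eta_1)$ uses of $V$ and $V^\dagger$ to reach $\Omega(1)$ success probability, with the flag marking success. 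Multiplying by the per-use cost from the first step yields $C_h$ and the three stated query counts.

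Third, I check the accuracy claim. Measuring the flag in the state $\ket{0^{(n_r+1)}}$ yields $\ket{\bm{u}_h^q}$, where $\bm{u}_h^q = \sum_\ell (\tilde{\bm{u}}^{(\ell)}_{h,d})^q$. Applying $\mathcal{D}=\sqrt{2N_r}\,\text{Had}^{\otimes(n_r+1)}\otimes I$ to the unnormalized $\tilde{\bm{U}}_h^q$ reproduces this sum in its first block. The triangle-inequality estimate preceding \eqref{quantumerror} then gives $\|\bm{u}_h-\bm{u}_h^q\|_h = \mathcal{O}(\delta)$. That estimate carries a hidden $\sqrt{2N_r}$ factor, coming from Cauchy--Schwarz over the $2N_r$ blocks. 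I will absorb this factor into the choice of $\varepsilon$; doing so only adds $\log N_r$ to the logarithm, which I treat as $\mathcal{O}(\log\frac1h)$ or as a constant.

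The main obstacle is the first step: converting the normalized QLSA guarantee into the unnormalized error bound \eqref{quantumerror}. This needs both an upper bound on $\|\tilde{\bm{U}}_h\|$ and a reconstructed norm, and the norm itself must be estimated, for instance by amplitude estimation or by the known formula $\eta_1$. It must also be shown that the amplified approximate state still has amplitude comparable to $1/\eta_1$. I will handle this by noting that $\|\tilde{\bm{U}}_h^q\|$ is within $\mathcal{O}(\varepsilon)$ relative error of $\|\tilde{\bm{U}}_h\|$, so that $\eta_1$ changes only by a constant factor for small $h$.
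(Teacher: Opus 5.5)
Your proposal is correct and follows the paper's proof essentially step for step. You fix the QLSA accuracy by rescaling \eqref{quantumerror} with the upper bound on $\|\tilde{\bm{U}}_h\|$ from Lemma~\ref{lem:U_h_norm}, giving $\log(1/\varepsilon)=\mathcal{O}(\log\frac1h)$, and then multiply the per-use cost $C_{\tilde H}$ by the $\mathcal{O}(\eta_1)$ amplitude-amplification rounds bounded via the same lemma. Your extra remarks go beyond the paper: the $\sqrt{2N_r}$ factor hidden in the paper's $\lesssim$ step before \eqref{quantumerror}, the need for $\|\tilde{\bm{U}}_h\|$ to reconstruct the unnormalized vector, and the robustness of $\eta_1$. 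They cost at most an additive $\log N_r$ in the logarithm.
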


\begin{proof}
Using Eq.~\eqref{quantumerror} and the inequality $\| \frac{\bm{x}}{\|\bm{x}\|} - \frac{\bm{y}}{\|\bm{y}\|} \| \le 2 \frac{\|\bm{x} - \bm{y}\|}{\|\bm{x}\|}$ for two vectors $ \bm{x}, \bm{y} $, we can bound the error defined in \eqref{epsilon} as
\[\| \ket{\tilde{\bm{U}}_h} - \ket{\tilde{\bm{U}}_h^q} \| \le \frac{2 \| \tilde{\bm{U}}_h - \tilde{\bm{U}}^q_h \|}{\| \tilde{\bm{U}}_h \|}
\lesssim \frac{\delta}{h^{d/2} \| \tilde{\bm{U}}_h \|}
= \frac{ d h^{2-d/2} \|\bm{c}\|_1}{\| \tilde{\bm{U}}_h \|}=:\varepsilon. \]
By Lemma \ref{lem:U_h_norm},
\begin{align*}
\frac{1}{\varepsilon}
 = \frac{\| \tilde{\bm{U}}_h \|}{ d h^{2-d/2}\|\bm{c}\|_1}
  \lesssim \frac{\|{\bm{c}}\|_2 \|\bm{f}_{h, d}\| }{ d h^{2-d/2}\|\bm{c}\|_1}
 = \frac{\|{\bm{c}}\|_2 \|\bm{f}_{h, d}\|_h }{ d h^{2} \|\bm{c}\|_1}\lesssim  h^{-2},
\end{align*}
where we have assumed $\|\bm{f}_{h, d}\|_h \approx \|f\|_{L^2} = \mathcal{O}(1)$. This yields
\[ C_{\tilde{H}} = \Big((\|\bm{b}\|_{\max} + d h^{-2}) \log \frac{1}{\varepsilon}\Big) = \mathcal{O}\Big( (\|\bm{b}\|_{\max} + d h^{-2}) \log \frac{1}{h}\Big),  \]
where $C_{\tilde{H}}$ is defined in \eqref{CH}.

Now we compute the repeated times for measurement. Notably, we apply $\text{Had}^{\otimes (n_r+1)} \otimes I^{\otimes dn_x}$ to the system register.  Upon measuring the first $(n_r+1)$-qubits of this register and obtaining the outcome $\ket{0^{\otimes (n_r+1)}}$, the remaining $dn_x$-qubits of the system register collapse to the desired state $\ket{\bm{u}_h}$. In view of \eqref{finalstate}, by applying the amplitude amplification \cite{BerryChilds2017ODE} and Lemma \ref{lem:U_h_norm}, we can get $\Omega(1)$ probability with $\mathcal{O}(\eta_1)$ repetitions of the combined oracle $V$ in \eqref{Vfinal}, where
\begin{align*}
\eta_1
 \approx \frac{\sqrt{2N_r} \| \tilde{\bm{U}}_h\|}{\norm{\bm{u}_h}}
\le \sqrt{2N_r} \|{\bm{c}}\|_2 \|\bm{f}_{h,d}\| \cdot \frac{\|\bm{b}\|_{\min} + 8d} {c_1 \|\bm{f}_{h, d}\|}
= \frac{\sqrt{2N_r} \|{\bm{c}}\|_2 (\|\bm{b}\|_{\min} + 8d)} {c_1}.
\end{align*}
This completes the proof.
\end{proof}

The quantum algorithm exhibits a dependence on $h^{-1}$ that is \textit{independent} of the spatial dimension $d$, in stark contrast to classical methods, which suffer from an exponential dependence as the dimension increases. This establishes an exponential quantum advantage for high-dimensional fractional problems, where classical approaches are fundamentally limited by the curse of dimensionality.

\section{Explicit quantum circuit via Schr\"odingerization} \label{sec:circuit}

\subsection{The Schr\"odingerization approach}

Since the coefficient matrix $\tilde{H}$ in \eqref{lsmVar} is positive definite, the solution to \eqref{lsmVar} can be viewed as the steady state of the linear ordinary differential equation system
\begin{equation}\label{reformulationODE}
	\frac{{\rm d} \bb{v}(t)}{{\rm d} t}  = 	- \tilde{H} \bb{v}(t) +  \tilde{\bm{F}},\quad
	\bb{v} (0) = \bb{0}.
\end{equation}
Denote $\bb{x}_T = \bb{v}(T)$ and $\bb{x} = \tilde{\bm{U}}_h$. Following the analysis in \cite{JLY2025fractionalPoisson}, one obtains
\begin{equation}\label{approxOperator}
 \|\bb{x}_T - \bb{x}\|  \le  \delta \|\bb{x}\|
\end{equation}
provided that $T \ge \frac{\kappa(\tilde{H})}{\|\tilde{H}\|} \log \frac{1}{\delta}$, where $\delta$ indicates the target accuracy. This reformulation makes it possible to apply the Schr\"odingerization technique to construct a Hamiltonian formulation suitable for quantum computation.

By introducing a time-independent auxiliary vector, \eqref{reformulationODE} can be rewritten as a homogeneous ODE system:
\begin{equation}\label{reformulationODEaug}
	\frac{{\rm d}}{{\rm d} t} \bb{v}_f = H_f \bb{v}_f,
	\quad
	H_f = \begin{bmatrix}
		-\tilde{H}  & \frac{I}{T}\\
		O & O
	\end{bmatrix},\quad
	\bb{v}_f(0) =  \begin{bmatrix}
		\bb{0} \\
		T \tilde{\bm{F}}
	\end{bmatrix},
\end{equation}
whose solution takes the form $\bb{v}_f = [\bb{v}^\top, T \tilde{\bm{F}}^\top]^\top$ (with $\top$ denoting transpose).

To carry out the Schr\"odingerization, we split $H_f$ into Hermitian and anti-Hermitian components: $H_f = H_1 + \i H_2$, where
\begin{equation}\label{eq:H12}
	H_1 = \frac{1}{2}(H_f + H_f^{\dagger}) = \begin{bmatrix}
		-\tilde{H}& \frac{1}{2T}I^{\otimes n_{\tilde{H}}} \\
		\frac{1}{2T}I^{\otimes n_{\tilde{H}}}  & O
	\end{bmatrix},\quad
	H_2 = \frac{1}{2\i}(H_f - H_f^{\dagger}) =  \frac{1}{2\i}
	\begin{bmatrix}
		O & \frac{1}{T} I^{\otimes n_{\tilde{H}}}\\
		-\frac{1}{T}I^{\otimes n_{\tilde{H}}} &O
	\end{bmatrix},
\end{equation}	
with $n_{\tilde{H}} = (n_r+1) + d n_x$. Both $H_1$ and $H_2$ are Hermitian, and $\dagger$ indicates conjugate transpose.

Applying the transformation $\bb{w}(t,p) = \e^{-p}\bb{v}_f(t)$ for $p\ge 0$ (extended naturally to negative $p$) yields
\begin{equation}\label{uf2v}
	\frac{\partial}{\partial t} \bb{w}(t) =  -H_1 \partial_p \bb{w} + \i H_2 \bb{w}.
\end{equation}
The initial condition is set as $\bb{w}(0,p) = \psi(p) \bb{v}_f(0)$ with $\psi(p) = \e^{-|p|}$.
When $H$ is positive semi-definite, the relation $\bb{v}_f(t)=\e^p \bb{w}(t,p)$ holds for all $p>p_* = \frac12$; thus $\bb{v}_f(t)$ can be recovered by projecting $\bb{w}$ onto the region $p > p_*$, and $\bb{v}(t)$ is then obtained by extracting the first block.

We restrict $p$ to the interval $[-\pi R, \pi R]$ with $R>0$ chosen so that $\e^{-\pi R} \approx 0$. A periodic boundary condition is applied in the $p$-direction, and a Fourier spectral discretization is employed. The auxiliary variable is discretized using a uniform mesh $\Delta p = 2 \pi R/N_p$ with an even number of points $N_p = 2^{n_p}$. The grid points are $-\pi R = p_0<p_1<\cdots<p_{N_p} = \pi R$. The one-dimensional basis functions for the Fourier spectral method are typically
\[\phi_l(p) = \e ^{\i \mu_l (p+\pi R)} , \quad \mu_l = \frac{l-N_p/2 }{R}, \quad l=0,1,\cdots,N_p-1.\]
We also define
\[\Phi = (\phi_{jl})_{N_p\times N_p} = (\phi_l(p_j))_{N_p\times N_p}, \qquad D_\mu = \text{diag} ( \mu_0, \cdots, \mu_{N_p-1} ).\]

Let $\bb{w}(t,p) = [w_1(t,p), \cdots, w_N(t,p)]^\top$ be the solution of \eqref{uf2v}. The spectral approximation for each component is
\[w_{i,h}(t,p) = \sum_{l=0}^{N_p-1} \tilde{w}_{i,l,h}(t) \phi_l(p), \qquad i = 1,\cdots,N,\]
where the subscript $h$ indicates the numerical solution. The corresponding vector is $\bb{w}_h(t,p) = [w_{1,h}(t,p),\cdots, w_{N,h}(t,p)]^\top$, which can be expressed as
\begin{equation}\label{interpcoeff}
\bb{w}_h(t,p) = \sum_{l=0}^{N_p-1} \tilde{\bb{w}}_{l,h}(t) \phi_l(p), \qquad
	\tilde{\bb{w}}_{l,h}(t) = \frac{1}{N_p} \sum\limits_{k=0}^{N_p-1} \bb{w}_h(t,p_k) \e ^{ - \i \mu_l (p_k+ \pi R)} .
\end{equation}
Define $\bb{W}_h$ as the collection of $\bb{w}_h$ evaluated at the grid points:
\[\bb{W}_h(t) = \sum_{k,i} w_{i,h}(t,p_k) \ket{k,i}
= [\bb{w}_h(t,p_0) ; \cdots; \bb{w}_h(t,p_{N_p-1})].\]
Similarly, let
\[\tilde{\bb{W}}_h(t) = \sum_{l,i} \tilde{w}_{i,l,h}(t) \ket{l,i}
= [\tilde{\bb{w}}_{0,h}(t) ; \cdots; \tilde{\bb{w}}_{N_p-1,h}(t)].\]
One can verify that $\bb{W}_h(t) = (\Phi \otimes I_{N\times N}) \tilde{\bb{W}}_h(t)$. The resulting ODE system becomes Hamiltonian:
\begin{equation} \label{eq:hamiltonian}
\frac{\rm d}{{\rm d} t} \bb{W}_h = -\i(P_\mu \otimes H_1) \bb{W}_h + \i (I^{\otimes n_p} \otimes H_2)\bb{W}_h, \qquad
\bb{W}_h(0) = \bb{\psi} \otimes \bb{v}_f(0),
\end{equation}
with $I^{\otimes n_p}$ representing the $n_p$-qubit identity and $\bb{\psi} = [\psi(p_0), \cdots, \psi(p_{N_p-1})]^\top$.
In terms of $\tilde{\bb{W}}_h = (\Phi^{-1} \otimes I_{N\times N})\bb{W}_h$, we obtain
\begin{equation}\label{discreteLCHSQLSP}
\begin{cases}
\dfrac{\d}{\d t} \tilde{\bb{W}}_h(t) = -\i (D_\mu \otimes H_1 - I^{\otimes n_p} \otimes H_2) \tilde{\bb{W}}_h(t), \\
\tilde{\bb{W}}_h(0) = \tilde{\bb{\psi}} \otimes \bb{v}_f(0), \quad \tilde{\bb{\psi}} = \Phi^{-1}\bb{\psi}.
\end{cases}
\end{equation}

The complete circuit for implementing the quantum simulation of $\ket{\bm{W}_h}$ is illustrated in Fig.~\ref{schr_circuit}, where
\begin{equation}\label{UT}
\mathcal{U}(T) = \e^{-\i (D_\mu \otimes H_1 - I^{\otimes n_p} \otimes H_2) T}.
\end{equation}
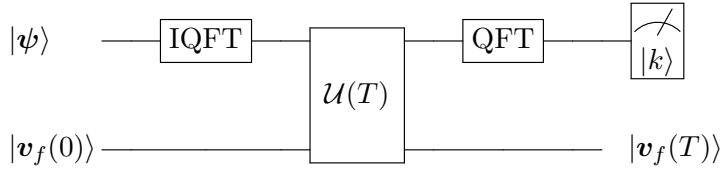
\begin{figure}[!htb]
	\centering
	\centerline{
		\Qcircuit @C=1em @R=2em {
			\lstick{\hbox to 2.7em{$\ket{\bm{\psi}}$\hss}}
			& \qw
			& \gate{\text{IQFT}}
			& 	\qw
			& \multigate{1}{\mathcal{U}(T)}	
			&  \qw
			& \gate{\text{QFT}}
			& \qw	
			& \qw  &\meterB{\ket{k}}\\
			\lstick{\hbox to 2.7em{$\ket{\bm{v}_f(0)}$\hss}}
			& \qw
			& \qw
			& \qw
			& \ghost{\mathcal{U}(T)}
			& \qw
			& \qw
			& \qw
			& \qw  & \hbox to 2em{$\ket{\bm{v}_f(T)}$\hss}
	}}
	\caption{Quantum circuit for the Schr\"odingerization method, where  $\bm{\psi} = \sum_{k=0}^{N_p-1} \psi(p_k)\ket{k} $.}
	\label{schr_circuit}
\end{figure}

\subsection{Quantum circuit for the Hamiltonian evolution}

The evolution operator for \eqref{UT} can be approximated via the first-order Lie-Trotter-Suzuki decomposition:
\[\mathcal{U}(T) = \prod_{j=1}^{N_t} \mathcal{U}(\Delta t), \qquad \Delta t = \frac{T}{N_t}, \]
where
\[\mathcal{U}(\Delta t) = \e^{-\i (D_\mu \otimes H_1 - I^{\otimes n_p} \otimes H_2) \Delta t}
\approx  \e^{\i (I^{\otimes n_p} \otimes H_2) \Delta t} \e^{-\i (D_\mu \otimes H_1 ) \Delta t}  =: V_2(\Delta t) V_1(\Delta t).\]

\textbf{The construction of $V_1(\Delta t)$}. Let $r = -\frac{\Delta t}{R}$. The unitary operator $V_1(\Delta t)$ can be written as
\begin{align*}
V_1(\Delta t)
 = \exp\Big(\sum_{k=0}^{N_p-1} \mu_k \ket{k}\bra{k} \otimes (-\i H_1 \Delta t) \Big)
 = \exp\Big(\sum_{k=0}^{N_p-1} \Big( k-\frac{N_p}{2} \Big) \ket{k}\bra{k} \otimes  \i r H_1  \Big) .
\end{align*}
Since the terms in the exponent commute, we have
\begin{align*}
V_1(\Delta t)
& = \sum_{k=0}^{N_p-1} \ket{k}\bra{k} \otimes ( \e^{\i r H_1 })^{k-N_p/2} \\
& = \Big(I^{\otimes n_p} \otimes ( \e^{\i r H_1 })^{-2^{n_p-1}} \Big)  \Big(\sum_{k=0}^{2^{n_p}-1} \ket{k}\bra{k} \otimes ( \e^{\i r H_1 })^k\Big) \\
& = \Big(I^{\otimes n_p} \otimes ( \e^{\i r H_1 })^{-2^{n_p-1}} \Big) \text{SEL}(\e^{\i r H_1}).
\end{align*}
Analogous to Fig.~2 of \cite{JLY2025fractionalPoisson}, the circuit implementation of $V_1(\Delta t)$, with $W =\e^{\i r H_1 }$, is presented in Fig.~\ref{fig:V1_detailed}.

\begin{figure}[htbp]
\centering
\centerline{
\Qcircuit @C=0.8cm @R=0.8cm {
    & \lstick{\ket{k_0}}       & \qw      & \ctrl{4}  & \qw     & \qw & \cdots  &   & \qw & \qw \\
    & \lstick{\ket{k_1}}       & \qw      & \qw       & \ctrl{3}& \qw & \cdots  &   & \qw & \qw \\
    \vdots \\
    & \lstick{\ket{k_{n_p-1}}} & \qw      & \qw       & \qw     & \qw & \cdots  &   & \ctrl{1} & \qw  \\
    &     & \gate{W^{-2^{n_p-1}}} & \gate{W^{2^{0}}} & \gate{W^{2^{1}}} & \qw &\cdots &   & \gate{W^{2^{n_p-1}}} & \qw
}}
\caption{Quantum circuit for $V_1(\Delta t)$. The first gate $W^{-2^{n_p-1}}$ corresponds to $(\e^{\i r H_1})^{-2^{n_p-1}}$, and the subsequent controlled-$W^{2^j}$ gates implement the select oracle $\text{SEL}(\e^{\i r H_1})$.}
\label{fig:V1_detailed}
\end{figure}
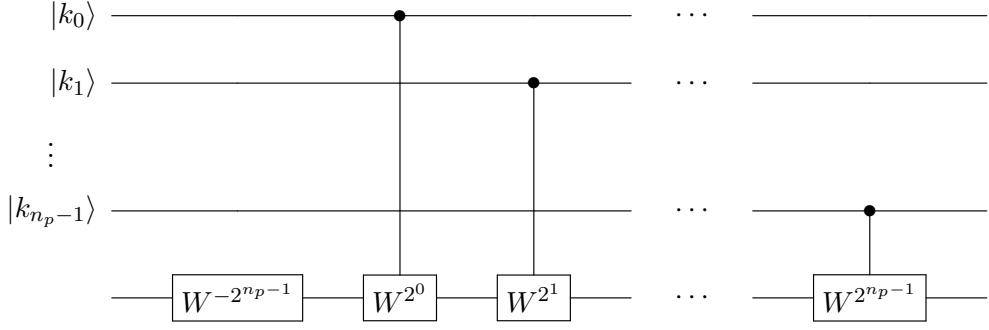

Now we need to approximate $\e^{\i r H_1}$. Using \eqref{eq:H12}, we decompose $H_1$ and apply the first-order Trotter decomposition to get
\begin{align*}
\e^{\i r H_1}
\approx \exp\Big( -\i r \begin{bmatrix}
		\tilde{H} &  O \\
		O  & O
	\end{bmatrix}  \Big)\exp\Big( \i r' \begin{bmatrix}
		 0& 1 \\
		1  & 0
	\end{bmatrix} \otimes I^{\otimes n_{\tilde{H}}}\Big)=: U_2 U_1,
\end{align*}
where $r' = \frac{r}{2T}$.  The matrix $U_1$ satisfies
\[U_1 = \exp( \i r' \sigma_x \otimes I^{\otimes n_{\tilde{H}}} ) = \e^{ \i r' \sigma_x }\otimes I^{\otimes n_{\tilde{H}}}
= R_x(-2r') \otimes I^{\otimes n_{\tilde{H}}}, \]
where $R_x(\theta) = \e^{-\i \theta \sigma_x/2}$ is the rotation gate about the $x$-axis. The matrix $U_2$ satisfies
\[U_2 = \begin{bmatrix}
		\e^{-\i r \tilde{H}} &  O \\
		 O  & I^{\otimes n_{\tilde{H}}}
	\end{bmatrix}
= \ket{0}\bra{0}\otimes \e^{-\i r \tilde{H}} + \ket{1}\bra{1} \otimes I^{\otimes n_{\tilde{H}}},\]
which is a controlled unitary operator.
The operator $\e^{-\i r \tilde{H}}$ can be written as
\begin{align*}
\e^{-\i r \tilde{H}}
& = \exp\Big(-\i r \begin{bmatrix}
H & \\
  & I^{\otimes n_H}
\end{bmatrix} \Big)
 = \ket{0}\bra{0} \otimes \e^{-\i r H}  + \ket{1}\bra{1} \otimes \e^{-\i r}I^{\otimes n_H} \\
& = \Big(\ket{0}\bra{0} \otimes \e^{-\i r H}  + \ket{1}\bra{1} \otimes I^{\otimes n_H}\Big)\Big( (\ket{0}\bra{0} + \e^{-\i r}\ket{1}\bra{1}) \otimes I^{\otimes n_H} \Big)  \\
& = \Big(\ket{0}\bra{0} \otimes \e^{-\i r H}  + \ket{1}\bra{1} \otimes I^{\otimes n_H}\Big)\Big(\text{Ph}(-r) \otimes I^{\otimes n_H} \Big) ,
\end{align*}
where $n_H = n_r+dn_x = n_{\tilde{H}}-1$ and $\text{Ph}(\theta) = \ket{0}\bra{0} + \e^{\i \theta} \ket{1}\bra{1}$ is the phase gate. This reduces to the realization of $\e^{-\i r H}$,which can be written as
\begin{align*}
\e^{-\i r H}
= \e^{-\i r (I^{\otimes n_r} \otimes  A_{h, d} +  B \otimes I^{\otimes {dn_x}})}
= (I^{\otimes n_r} \otimes \e^{-\i r A_{h,d}} ) ( \e^{-\i r B} \otimes I^{\otimes {dn_x}} ).
\end{align*}
The overall circuit for $\e^{-\i r \tilde{H}}$ is illustrated in Fig.~\ref{fig:exp_tilde_H}, where the explicit construction of $\e^{-\i r A_{h,d}}$ using the decomposition of shift operators is provided in \cite{HuJin24SchrCircuit,JLY24Circuits}, and $\e^{-\i r B}$ is a diagonal unitary operator.

\begin{figure}[htbp]
\centering
\centerline{
\Qcircuit @C=0.8cm @R=0.8cm {
    & \lstick{\ket{0}}  & \gate{\text{Ph}(-r)} & \ctrlo{2} & \ctrlo{1} \qw \qw & \qw & \qw \\
    & \lstick{\ket{0}^{\otimes n_r}} & \qw & \qw & \gate{\e^{-\i r B}} & \qw & \qw \\
    & \lstick{\ket{0}^{\otimes {dn_x}}} & \qw& \gate{\e^{-\i r A_{h,d}}} & \qw & \qw & \qw 
}}
\caption{Quantum circuit for $\e^{-\i r \tilde{H}}  = \ket{0}\bra{0} \otimes \e^{-\i r H}  + \ket{1}\bra{1} \otimes \e^{-\i r}I^{\otimes n_H}$.}
\label{fig:exp_tilde_H}
\end{figure}

\textbf{The construction of $V_2(\Delta t)$}. Recall from \eqref{eq:H12} that
\[H_2 = \frac{1}{2T}
	\begin{bmatrix}
		0 & -\i \\
		\i & 0
	\end{bmatrix} \otimes I^{\otimes n_{\tilde{H}}} = \frac{1}{2T} \sigma_y \otimes I^{\otimes n_{\tilde{H}}},\]
where $\sigma_y$ is the Pauli-$Y$ matrix. Thus,
\begin{align*}
V_2(\Delta t)
= \e^{\i (I^{\otimes n_p} \otimes H_2) \Delta t} = I^{\otimes n_p} \otimes \e^{\i a \sigma_y} \otimes I^{\otimes n_{\tilde{H}}}, \qquad
a = \frac{\Delta t}{2T},
\end{align*}
Since $\e^{\i a \sigma_y} = R_y(-2a)$ with $R_y(\theta) = \e^{-\i \theta \sigma_y /2}$ being the standard rotation gate about the $y$-axis, $V_2(\Delta t)$ can be implemented by a single-qubit rotation.

\section{Conclusions}

We have developed a quantum algorithm for the fractional Poisson equation based on rational approximation. The method represents the inverse fractional Laplacian as a weighted sum of standard resolvents, transforming the nonlocal problem into shifted integer-order PDEs and yielding sparse linear systems amenable to quantum computation.
A key step is consolidating these multiple shifted problems into a single linear system via a modified right-hand side. This simplifies the quantum implementation and avoids explicit linear combination of independent solutions. Under finite difference discretization, we provide detailed procedures for block-encoding the coefficient matrices and recovering the final solution via Hadamard transformations.
Complexity analysis reveals a significant quantum advantage: the dependence on $h^{-1}$ is independent of the spatial dimension $d$, whereas classical methods suffer exponential growth. This dimension-independent scaling overcomes the curse of dimensionality, establishing an exponential quantum advantage that grows with dimension.

To enable practical implementation, we constructed explicit quantum circuits via the Schr\"odingerization technique, which converts the non-unitary dynamics of the linear system into a higher-dimensional Schr\"odinger-type equation, allowing the use of standard Hamiltonian simulation. The circuit construction leverages the decomposition of shift operators to realize the discrete Laplacian and employs controlled operations to implement the select oracle.

Future directions include extending the framework to more general fractional operators and boundary conditions, exploring alternative discretizations such as finite element or spectral methods, and investigating quantum preconditioning techniques to further improve complexity.

\section*{Acknowledgments}

Y Yang was supported by NSFC grant (No.\ 12571469), Scientific Research Innovation Capability Support Project for Young Faculty of China (No.\ SRICSPYF-BS2025132), the Project of Scientific Research Fund of the Hunan Provincial Science and Technology Department (No.\ 2024JJ1008).
Y Yu was supported by NSFC grant (No.\ 12301561), the Key Project of Scientific Research Project of Hunan Provincial Department of Education (No.\ 24A0100), the Science and Technology Innovation Program of Hunan Province (No.\ 2025RC3150) and the general program of Hunan Provincial Natural Science Foundation (No.\ 2026JJ50003).
L Zhang was supported by the Hunan Provincial Graduate Student Research and Innovation Project (No.\ CX20250933) and the Xiangtan University Graduate Student Research and Innovation Project (No.\ XDCX2025Y188).
This research was supported in part by the 111 Project (No.\ D23017), and Program for Science and Technology Innovative Research Team in Higher Educational Institutions of Hunan Province of China.

\bibliographystyle{plain} 
\bibliography{Refs}	

\end{document}